\DeclareMathAlphabet{\mathbbold}{U}{bbold}{m}{n}
\crefname{thm}{Theorem}{Theorems}
\renewcommand{\backref}[1]{}
\renewcommand{\backrefalt}[4]{%
\ifcase #1 %
\or
[p.\ #2]%
\else
[pp.\ #2]%
\fi}
\newcommand{\para}{%
  \@startsection{paragraph}{4}%
  {\z@}{2ex \@plus 3.3ex \@minus .2ex}{-1em}%
 % {\normalfont\small\bfseries\uppercase}%
  {\normalfont\normalsize\bfseries}%
}
\newcommand\blfootnote[1]{%
  \begingroup
  \renewcommand\thefootnote{}\footnote{#1}%
  \addtocounter{footnote}{-1}%
  \endgroup
}
\newtheorem{fact}{Fact}
\newtheorem{defn}{Definition}
\newtheorem{lem}{Lemma}
\newtheorem{thm}{Theorem}
\newtheorem{prop}{Proposition}
\newcommand{\lemref}[1]{\hyperref[lem:#1]{Lemma~\ref*{lem:#1}}}
\newcommand{\propref}[1]{\hyperref[prop:#1]{Proposition~\ref*{prop:#1}}}
\newcommand{\algoref}[1]{\hyperref[algo:#1]{Algorithm~\ref*{algo:#1}}}
\newcommand{\As}{\mathcal{A}}
\newcommand{\Ss}{\mathcal{S}}
\newcommand{\var}{\mathrm{Var}}
\newcommand{\expect}{\mathbb{E}}
\newcommand{\pr}{\mathrm{Pr}}
\newcommand{\norm}[1]{\|#1\|}
\newcommand{\abs}[1]{\left|#1\right|}
\newcommand{\trans}{^\textsf{T}}
\DeclareMathOperator{\zero}{\textbf{0}}
\DeclareMathOperator{\one}{\textbf{1}}
\DeclareMathOperator{\nandgate}{\textsf{NAND}}
\DeclareMathOperator{\notgate}{\textsf{NOT}}
\DeclarePairedDelimiter{\ceil}{\lceil}{\rceil}
\DeclareMathOperator{\argmax}{argmax}
\DeclareMathOperator{\A}{\mathcal{A}}
\DeclareMathOperator{\C}{\mathcal{C}}
\DeclareMathOperator{\Q}{\mathcal{Q}}
\DeclareMathOperator{\sigmalow}{\sigma_{\text{low}}}
\DeclareMathOperator{\sigmahigh}{\sigma_{\text{high}}}
\DeclareMathOperator{\Oracle}{\mathcal{G}}
\newcommand{\T}[1]{\mathcal{T}^{#1}}
\newcommand{\Tind}{\mathcal{T}}
\newcommand{\EstI}[3]{\mathsf{qEst1}_{#1}\boldsymbol{(}{#2}, \, {#3}\boldsymbol{)}}
\newcommand{\EstII}[3]{\mathsf{qEst2}_{#1}\boldsymbol{(}{#2}, \, {#3}\boldsymbol{)}}
\newcommand{\Order}[1]{O\boldsymbol{(}{#1}\boldsymbol{)}}
\newcommand{\qargmax}{\textsf{\textup{qArgmax}}}
\newcommand{\SolveMdpI}{\textup{\textsf{SolveMdp1}}}
\newcommand{\SolveMdpII}{\textup{\textsf{SolveMdp2}}}
\newcommand{\AI}{\textup{\textsf{\textup{qEst1}}}}
\newcommand{\AII}{\textup{\textsf{\textup{qEst2}}}}
\newcommand{\qEst}{\textup{\textsf{\textup{qEst}}}}
\newcommand{\qEsti}{\textup{\textsf{\textup{qEst}}\{i\}}}
\newcommand{\qEstI}{\textup{\textsf{\textup{qEst1}}}}
\newcommand{\qEstII}{\textup{\textsf{\textup{qEst2}}}}
\newcommand{\Or}{\textup{\textsf{\textup{OR}}}}
\newcommand{\Id}{\textup{\textsf{Identity}}}
\newcommand{\eps}{\epsilon}
\begin{document}

\title{\bfseries Quantum Algorithms for Reinforcement Learning \\ with a Generative Model
}

\author{
\hspace{5em}Daochen Wang\footnote{University of Maryland. \texttt{wdaochen@gmail.com}} 
\and
Aarthi Sundaram\footnote{Microsoft Quantum. \texttt{aarthi.sundaram@microsoft.com}}
\and
Robin Kothari\footnote{Microsoft Quantum. \texttt{robin.kothari@microsoft.com}}
\hspace{5em}
\and 
Ashish Kapoor\footnote{Microsoft. \texttt{akapoor@microsoft.com}}
\and
Martin Roetteler\footnote{Microsoft Quantum. \texttt{martinro@microsoft.com}}
}
\date{}

\maketitle

%=============================================================================
\begin{abstract}
  Reinforcement learning studies how an agent should interact with an environment to maximize its cumulative reward. 
  A standard way to study this question abstractly is to ask how many samples an agent needs from the environment to learn an optimal policy for a $\gamma$-discounted Markov decision process (MDP). For such an MDP, we design quantum algorithms that approximate an optimal policy ($\pi^*$), the optimal value function ($v^*$), and the optimal  $Q$-function ($q^*$), assuming the algorithms can access samples from the environment in quantum superposition. 
  This assumption is justified whenever there exists a simulator for the environment; for example, if the environment is a video game or some other program. 
  Our quantum algorithms, inspired by value iteration, achieve quadratic speedups over the best-possible classical sample complexities in the approximation accuracy ($\epsilon$) and two main parameters of the MDP: the effective time horizon ($\frac{1}{1-\gamma}$) and the size of the action space ($A$). 
  Moreover, we show that our quantum algorithm for computing $q^*$ is optimal by proving a matching quantum lower bound. \blfootnote{\newline Note: a conference version of this work appears as \cite{ConferenceVersion_Rl_2021} (ICML 2021).}
\end{abstract}

%=============================================================================
\section{Introduction}
\label{sec:intro}

Markov Decision Processes (MDPs) are a fundamental mathematical abstraction in reinforcement learning, used to model problems where an agent should take actions in an environment to maximize its cumulative reward. The framework has been successfully applied to problems in healthcare, robotics, engineering, gaming, natural language processing, finance, and so on~\cite{Berteskas_DynamicControl_2000, Bertsekas_Abstract_2013,Szepesvari_AlgorithmsRl_2010, SB_ReinforcementLearning_2018,AgarwalJiangKakadeSun_Reinforcement_2021}. 

Quantum computers are a model of computation based on the laws of quantum mechanics that promise substantially faster algorithms for certain tasks like search and factoring~\cite{Grover_Search_1996, Shor_Factoring_1997}. Recent experiments have achieved key milestones~\cite{Google_QuantumSupremacy_2019}, bringing forward the tantalizing prospect of using quantum computers for real-world impact in the not-so-distant future.

In this paper, we construct quantum algorithms that more efficiently solve the main problems associated with MDPs: approximating an optimal policy, the optimal value function, and the optimal Q-value function. Our algorithms rely on the assumption that we have quantum access to the environment, which we will justify.

We intend this introduction to be accessible to those unfamiliar with quantum computing, and we have delayed technical discussions of quantum algorithms to \Cref{sec:preliminaries}.

\subsection{Problem Setup}

We study an \emph{infinite-horizon discounted MDP}, $M$, with a finite set, $\Ss$, of \emph{states}, where at each state an agent can choose to take an action from a finite set, $\As$, of \emph{actions}. 
Upon taking an action $a\in \As$ at state $s\in \Ss$, the agent receives \emph{reward}\footnote{We use square brackets to index into vectors and functions.} $r[s,a]\in [0,1]$ and transitions to a state $s'\in \Ss$ with some probability $p(s'|s,a)$. 
The last parameter needed to specify $M$ is the \emph{discount factor} $\gamma \in [0,1)$, which discounts the reward the agent receives at later time steps $t$ by a factor of $\gamma^t$. 
Hence $M$ is conveniently summarized by a $5$-tuple, $M=(\Ss,\As,p,r,\gamma)$. 
For convenience, we define $S = |\Ss|$ and $A = |\As|$, the cardinalities of $\Ss$ and $\As$ respectively, and $\Gamma \coloneqq (1-\gamma)^{-1}$ for the \emph{effective time horizon} of the MDP. 

Given such an MDP, the agent's goal is to choose actions to maximize its expected sum of $\gamma$-discounted rewards over infinitely many time steps. 
Following standard practice, we assume the agent has full knowledge of $\Ss$, $\As$, $r$, and $\gamma$, but not $p$ at the outset. 
A primary objective is to compute a deterministic \emph{policy} $\pi:\Ss \to \As$ for the agent that specifies the action $a = \pi(s)$ it should take at $s\in \Ss$ to best achieve its goal with high probability. 

For a given policy $\pi:\Ss\rightarrow \As$, the \emph{value-function} (or simply \emph{value}) of $\pi$, $v^{\pi}:\Ss\rightarrow[0,\Gamma]$, and the \emph{Q-function} of $\pi$, $q^{\pi}: \Ss\times\As \rightarrow [0,\Gamma]$, are defined by
\begin{equation}\label{eqn:vq}
\begin{aligned}
    v^{\pi}[s] &= \mathbb{E} \left[\sum_{t=0}^\infty \gamma^t r[s_t, a_t] \ \middle\vert \ 
    s_0 = s, \ \ \forall i \, \geq \, 0 \, : \, a_i = \pi[s_i] \right], \ \textrm{and}\\
    q^{\pi}[s,a] &=  \mathbb{E} \left[\sum_{t=0}^\infty \gamma^t r[s_t, a_t]\ \middle\vert \ 
    s_0 = s, \ a_0 = a, \ \ \forall i \,\geq \,1 \,:\, a_i = \pi[s_i] \right],
\end{aligned}
\end{equation}
where the expectations are over the probabilistic state transitions, i.e., for all $i \geq 0$, $s_{i+1}$ is sampled from the distribution $p(\cdot|s_i,a_i)$. 
Note that the maximum value that the sums in~\cref{eqn:vq} can take is $\Gamma$, because the reward function is at most $1$, and hence $v^{\pi}[s]$ and $q^{\pi}[s,a]$ are in $[0,\Gamma]$. 
It is known that any such MDP admits an optimal policy $\pi^*:\Ss\rightarrow\As$, in the strong sense that $v^{\pi^*}[s] \geq v^{\pi}[s]$  and $q^{\pi^*}[s,a] \geq q^{\pi}[s,a]$ for all $\pi \in \Pi$, $s\in \Ss$, and $a \in \As$, where $\Pi$ is the space of all policies (which could even contain randomized and non-stationary policies\footnote{In a randomized policy, the action taken at a given $s \in \Ss$ may be probabilistic. A stationary policy is one where the action taken depends only on the current state $s$.})~\cite{AgarwalJiangKakadeSun_Reinforcement_2021}. 
It is common to define $v^*\coloneqq v^{\pi^*}$ and $q^*\coloneqq q^{\pi^*}$.

We can now state our main computational tasks precisely. 
Using $\norm{\cdot}$ for the infinity norm of a vector, for a given MDP $M$, $\epsilon\in (0,\Gamma)$, and $\delta\in(0,1)$, our goal is to compute a policy $\hat{\pi}$ for $M$ such that with probability at least $1-\delta$, it satisfies $\norm{v^*- v^{\hat{\pi}}}\leq \epsilon$.%
\footnote{It is common in the field of MDPs to have $\eps$ denote the additive approximation error of a number in $[0,\Gamma]$, which makes the valid range of $\eps$ be $(0,\Gamma)$. A more natural normalization may be to divide the $q$ and $v$ functions by $\Gamma$ to have $\eps\in(0,1)$, but this changes what the sample complexity expressions look like, making it harder to visually compare our bounds with prior work.} 
In addition, we are interested in the related tasks of computing approximations $\hat{v}$ (resp.~$\hat{q}$) to $v^*$ (resp.~$q^*$) such that $\norm{v^* - \hat{v}}\leq \epsilon$ (resp.~$\norm{q^* - \hat{q}}\leq \epsilon$) with probability at least $1-\delta$.

The goal of this paper is to design algorithms that perform the above computational tasks using as few resources as possible. The resource use of an algorithm is normally quantified by either its time complexity or by the number of samples it draws from the unknown distribution $p(s'|s,a)$. 
In our paper, we study the latter and assume the \emph{generative model} of sampling, as studied by \cite{KearnsSingh_PhasedQlearning_1999,KearnsMansourNg_SparseSampling_2002,Kakade_Thesis_2003}, where we can choose an \emph{arbitrary} $(s,a)\in \Ss\times \As$ and ask a simulator to draw samples $s'\sim p(\cdot|s,a)$. 
Our goal then translates to minimizing the number of uses of the simulator. 
The generative model makes particular sense when the environment is a computer program, in which case the simulator is the program itself.

We now let quantum computing enter the picture. If the simulator is itself a computer program and we have its source code, then we can produce a Boolean circuit $G$ (with size roughly the same as the time complexity of the program) that acts as the simulator, i.e., draws samples from the distribution $p(\cdot|s,a)$. 
We can use the following basic fact in quantum computation to efficiently convert $G$ to a quantum circuit $\Oracle$ (see \cite{Ben73} or \cite[Sec.~1.4.1]{NielsenChuang_QuantumComputation_2000}).

\begin{fact}
Any classical circuit $G$ with $N$ logic gates can be converted to a quantum circuit consisting of $O(N)$ logic gates that can compute on any quantum superposition of inputs; moreover, the conversion is efficient and based on simple conversion rules at the logic gate level. 
\end{fact}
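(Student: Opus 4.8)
The plan is to prove this in three conceptual stages: first convert $G$ into a logically equivalent \emph{reversible} classical circuit with only a constant-factor blow-up in gate count, then observe that every reversible classical gate is literally a unitary (a permutation matrix) and hence a legal quantum gate, and finally invoke linearity of quantum evolution to conclude that the resulting circuit acts correctly on arbitrary superpositions of inputs.

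First, I would fix a universal classical gate set; it suffices to build $G$ out of $\nandgate$ gates together with wire-fanout (copying), since these are universal for Boolean circuits and $G$ can be rewritten in this basis with at most a constant-factor increase in size. Each such primitive is then simulated by a fixed reversible gadget. The workhorse is the Toffoli (controlled-controlled-NOT) gate, which maps basis states $\ket{a}\ket{b}\ket{c} \mapsto \ket{a}\ket{b}\ket{c \oplus (a \wedge b)}$; initializing the target ancilla to $\one$ yields $\ket{a}\ket{b}\ket{\one} \mapsto \ket{a}\ket{b}\ket{\neg(a\wedge b)}$, so a single Toffoli (plus an $\notgate$ to prepare the ancilla) simulates one $\nandgate$. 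Fanout is handled by a controlled-NOT, which copies a bit via $\ket{a}\ket{\zero}\mapsto \ket{a}\ket{a}$. Replacing each of the $N$ gates of $G$ by its $O(1)$-gate reversible gadget --- introducing a bounded number of fresh ancilla qubits per gate --- produces a reversible circuit of size $O(N)$ whose designated output wires carry $f(x)$ whenever the input wires carry $x$. Since the reduction is a purely local, gate-by-gate substitution, it is efficient and rule-based, exactly as the statement asserts.

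Next, I would lift this reversible circuit to a quantum circuit. Each reversible gate is a bijection on bit-strings, so its action on the computational basis extends uniquely to a permutation matrix, which is unitary; thus Toffoli, $\notgate$, and controlled-NOT are all valid quantum gates. Composing them gives a unitary $\Oracle$ that on every basis input $\ket{x}\ket{\zero}$ produces $\ket{x}\ket{f(x)}\ket{g(x)}$ for some garbage register $g(x)$. Finally, because $\Oracle$ is linear, its behavior on a superposition is forced: $\Oracle \sum_x \alpha_x \ket{x}\ket{\zero} = \sum_x \alpha_x \ket{x}\ket{f(x)}\ket{g(x)}$, so the circuit indeed computes $f$ in superposition. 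If a garbage-free map $\ket{x}\ket{\zero}\mapsto\ket{x}\ket{f(x)}$ is required (e.g.\ so that later quantum interference is not spoiled by entangled junk), I would apply Bennett's uncomputation trick: copy out $f(x)$ with controlled-NOTs and then run the reversible circuit in reverse to erase the garbage, which only doubles the gate count and hence preserves the $O(N)$ bound.

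The main obstacle is the reversibility step, and specifically the control of its overhead: naively making a computation reversible risks either accumulating garbage that destroys the clean action on superpositions, or demanding super-linear resources to undo. The real content of the claim --- essentially Bennett's theorem~\cite{Ben73} --- is that the local Toffoli/ancilla substitution together with the uncompute pass keeps the blow-up to a constant factor. Once this is in hand, the passage from reversible to quantum and the extension to superpositions are immediate from linearity and require no further work.
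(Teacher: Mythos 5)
Your proposal is correct and follows essentially the same route as the paper: the paper attributes this Fact to Bennett's reversibilization theorem \cite{Ben73} (see also Nielsen--Chuang, Sec.~1.4.1) and spells out the identical construction in its appendix on building the quantum generative model --- replace each $\nandgate$ gate by a Toffoli with a prepared ancilla (the paper's footnote gives exactly this $2\times$ Toffoli, $2\times$ $\notgate$ accounting), note that the resulting reversible gates are permutation matrices and hence unitary, and invoke linearity to get the action on superpositions. Your optional uncomputation step to remove garbage is a sound addition, though the paper does not need it since its oracle definition tolerates arbitrary junk registers.
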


We refer to $\Oracle$ as the (quantum) oracle or simulator and the ability to query it as the (quantum) generative model. The precise behavior of $\Oracle$ is formally defined in \cref{sec:quantum-prelim}.

Under this setup, our goal is to design quantum algorithms approximating $q^*$, $\pi^*$, and $v^*$ that use the quantum simulator $\Oracle$ as few times as possible. We refer to the number of calls a quantum algorithm makes to $\Oracle$ as its (quantum) query or sample complexity. It is fair to compare the quantum sample complexity with the classical sample complexity because, as we have discussed above, $\Oracle$ and $G$ have similar costs at the elementary gate-level. 

Our paper constructs quantum algorithms having significantly less sample complexity than the best-possible classical algorithms. Moreover, we show that our quantum algorithms are either optimal, or optimal assuming $\Gamma$ or $A$ is constant, for certain ranges of $\epsilon$.

\subsection{Main Results}

\begin{table*}[tb]
    \centering
    \renewcommand{\arraystretch}{1.75}
    \begin{tabular}{c@{\qquad}cc@{\quad}ccc}
        \toprule
        \multirow{2}{*}{\thead{Goal: Output \\an $\epsilon$-accurate \\estimate of}} & Classical sample complexity &  \multicolumn{2}{c}{Quantum sample complexity} \\
        \cmidrule{2-4}
        & Upper and lower bound &  Upper bound & Lower bound 
        \\
        \midrule
        $q^*$ & $\frac{SA\Gamma^3}{\epsilon^2}$ & $\frac{SA\Gamma^{1.5}}{\epsilon}$ \hspace{0.5em} [\Cref{thm:complexity_1}] & $\frac{SA\Gamma^{1.5}}{\epsilon}$ \hspace{1em} [\Cref{thm:lower_bound}] 
        \\[2ex]
        %\hline
        \multirow{2}{*}{$v^*$, $\pi^*$} & \multirow{2}{*}{$\frac{SA\Gamma^3}{\epsilon^2}$} &  
        {$\frac{SA\Gamma^{1.5}}{\epsilon}$ \hspace{0.5em} [\Cref{thm:complexity_1}]} & 
        \multirow{2}{*}{$\frac{S\sqrt{A}\Gamma^{1.5}}{\epsilon}$ \hspace{0.5em} [\Cref{thm:lower_bound}] } \\
        & & $\frac{S\sqrt{A}\Gamma^{3}}{\epsilon}$ \hspace{0.5em} [\Cref{thm:complexity_2}] & 
        \\
        \bottomrule
    \end{tabular}
    \caption{Quantum computing allows for speedups in terms of the parameters  $\epsilon$, $\Gamma \coloneqq (1-\gamma)^{-1}$, and $A$, but not $S$. All bounds are for maximum failure probability $\delta$ constant. All upper bounds are $\widetilde{O}(\cdot)$, with unrestricted $\epsilon$ except when [\Cref{thm:complexity_1}] appears, in which case we assume $\epsilon\in O(1/\sqrt{\Gamma})$. All lower bounds are $\Omega(\cdot)$ and hold for any $\epsilon\in (0,\Gamma/4)$. The classical upper bounds are shown in \cite{Li_TightUpper_2020} for all $\epsilon$; the classical lower bounds are shown in \cite{AzarMunosKappen_MdpGenerative_2012} for $q^*,v^*$ and \cite{Sidford_NearOptimal_2018} for $\pi^*$. We also reprove all three classical lower bounds in \Cref{thm:lower_bound}.}
    \label{tab:results2}
\end{table*}

\Cref{tab:results2} summarizes our main results. The classical sample complexities have only recently been completely characterized for all three quantities~\cite{Li_TightUpper_2020} for the full range of $\epsilon \in (0,\Gamma]$. As the table shows, for computing $q^*$, we construct a quantum algorithm that offers a quadratic speedup in terms of $\Gamma$ and $\epsilon$ if $\epsilon = O(1/\sqrt{\Gamma})$. For computing $v^*$ and $\pi^*$, we construct a second quantum algorithm that offers an additional quadratic speedup in terms of $A$ at the expense of $\Gamma$. Moreover, we prove quantum lower bounds for computing all three quantities. Our lower bounds show that our $q^*$ algorithm is optimal, that we have optimal algorithms for $v^*$ and $\pi^*$ provided one of $\Gamma$ or $A$ is constant, but that there may still be a faster quantum algorithm for $v^*$ and $\pi^*$. We remark that we also reprove the \emph{classical} lower bounds in a qualitatively stronger way than existing bounds as explained at the end of the next section.

We remark that the time complexities of our quantum algorithms are the same as their sample complexities up to log factors assuming that the classical generative model can be called in constant time and that we have access to quantum random access memory (QRAM)~\cite{GiovannettiLloydMaccone_Qram_2008}. This is because the classical algorithm of \cite{Sidford_NearOptimal_2018} that we quantize satisfies this property and the quantum subroutines we use to quantize it also satisfy this property.

\subsection{Technical Overview}\label{sec:technical_overview}

We now give an overview of the techniques we used in our two quantum algorithms, \SolveMdpI\ and \SolveMdpII. \SolveMdpI\ and \SolveMdpII\  correspond to the complexities next to [\Cref{thm:complexity_1}] and [\Cref{thm:complexity_2}] in \cref{tab:results2} respectively. Our two quantum algorithms are essentially the product of infusing quantum subroutines into a modern variant of (approximate) value iteration by \cite{Sidford_NearOptimal_2018}. We first discuss the quantum subroutines: quantum mean estimation~\cite{Brassard_AmplitudeEstimation_2000,Montanaro_MonteCarlo_2015} and quantum maximum finding~\cite{Durr_MinFinding_1996}.

\para{Quantum subroutines.} Quantum mean estimation consists of two similar quantum algorithms \qEstI\ and \qEstII\, that we also refer to collectively as \qEst. Here, \qEst\ can compute the mean $\mathbb{E}[X]$ of a random variable $X$, suitably encoded quantumly, quadratically more efficiently than what is possible classically. \qEstI\ roughly corresponds to a quadratically more sample-efficient Hoeffding's inequality while \qEstII\ roughly corresponds to a quadratically more sample-efficient Chebyshev's (or Bernstein's) inequality. That is, getting additive error $\epsilon$ using these quantum algorithms takes quadratically fewer samples than what those classical inequalities imply. For example, Chebyshev's inequality states that $O(\var[X]/\epsilon^2)$ samples is required; \qEstII\ roughly states that only $O(\sqrt{\var[X]}/\epsilon)$ quantum samples is required. Using quantum mean estimation in both \SolveMdpI\ and \SolveMdpII\ yields the speedups in $\Gamma$ and $\epsilon$.

Quantum maximum finding, denoted \qargmax, is an algorithm that can find the maximum of a list of $n$ numbers, again suitably encoded quantumly, using only $O(\sqrt{n})$ queries to that list. $\qargmax$ is used in \SolveMdpII\ and is the source of its speedup in $A$.
\begin{algorithm*}[htbp]
\caption{$\SolveMdpI(M,\epsilon, \delta)$}
\label{algo:SolveMdp1}
\begin{algorithmic}[1]
    \STATE {\bfseries Input:} MDP $M = (\Ss,\As, p, r, \gamma)$, maximum error $\epsilon \in (0,\sqrt{\Gamma}]$, and maximum failure probability $\delta\in(0,1)$.
    
    \STATE {\bfseries Output:} $\hat{v} \coloneqq v_{K,L} \in \mathbb{R}^S$, $\hat{\pi} \coloneqq \pi_{K,L} \in \As^S$, and $\hat{q} \coloneqq q_{K,L} \in \mathbb{R}^{SA}$.

    \STATE {\bfseries Initialize:} $K  \gets \ceil{\log_{2}(\Gamma/\epsilon)}$, $L \gets \Gamma \ceil{\ln (4\Gamma/\epsilon)} + 1$, $f \gets \delta/4KLSA$, $b \gets 1$, $c \gets 0.01$

    \STATE {\bfseries Initialize:} $v_{1,0} \gets \zero$, $\pi_{1,0}\gets \text{arbitrary}$, $q_{1,0}\gets \zero$

    \FOR{$k\in [K]$}

    \STATE $\epsilon_k \gets \Gamma/2^{k}$

    \STATE $\forall(s,a) \in \Ss\times \As:$

    \STATE \quad $y_k[s,a] \gets \max \{\EstI{f}{(Pv_{k,0}^2)[s,a]}{b} - (\EstI{f}{(Pv_{k,0})[s,a]}{(1-\gamma)b})^2 , 0 \}$

    \STATE \quad $x_k[s,a] \gets \EstII{f}{(Pv_{k,0})[s,a]}{c(1-\gamma)^{1.5}\epsilon \sqrt{y_k[s,a]+b}} - c(1-\gamma)^{1.5}\epsilon\sqrt{y_k[s,a]+b}$
    
    \FOR{$l\in [L]$}

    \STATE $\forall s\in \Ss$: 
    \algorithmicif \ $v(q_{k,l-1})[s] \geq v_{k,l-1}[s]$ \
    \algorithmicthen \ $v_{k,l}[s]\gets v(q_{k,l-1})[s]$, $\pi_{k,l}[s]\gets \pi(q_{k, l-1})[s]$

    \STATE 
    \algorithmicelse \  $v_{k,l}[s]\gets v_{k,l-1}[s]$, $\pi_{k,l}[s]\gets \pi_{k,l-1}[s]$ \ {\bfseries end if}

    \STATE $\forall (s,a)\in \Ss\times \As:
    \Delta_{k,l}[s,a] \gets \EstI{f}{(P(v_{k,l}-v_{k,0}))[s,a]}{c(1-\gamma)\epsilon_k} - c(1-\gamma)\epsilon_k$
    
    \STATE $q_{k,l} \gets \max \{r+\gamma(x_{k}+\Delta_{k,l}), \zero\}$
    \ENDFOR

    \STATE $v_{k+1,0}\gets v_{k,L}$, $\pi_{k+1,0} \gets \pi_{k,L}$, $q_{k+1,0}\gets q_{k,L}$
    \ENDFOR
\end{algorithmic}
\end{algorithm*}

\begin{algorithm*}[htbp]
    \caption{$\SolveMdpII(M, \epsilon, \delta)$}
    \label{algo:SolveMdp2}
\begin{algorithmic}[1]
    \STATE {\bfseries Input:} MDP $M = (\Ss,\As, p, r, \gamma)$, maximum error $\epsilon \in (0,\Gamma]$, and maximum failure probability $\delta\in(0,1)$.
    
    \STATE {\bfseries Output:} $\hat{v} \coloneqq v_{L} \in \mathbb{R}^S$ and $\hat{\pi} \coloneqq \pi_{L} \in \As^S$.

    \STATE {\bfseries Initialize:} $L \gets \Gamma \ceil{\log (4\Gamma/\epsilon)} + 1$, $f \gets \delta/4c_{\max}LSA^{1.5}\log(1/\delta)$

    \STATE {\bfseries Initialize:} $v_0 \gets \zero$, $\pi_0 \gets \text{arbitrary}$, $\forall s\in \Ss: q_{0,s}\gets \zero\in \mathbb{R}^A$

    \FOR{$l\in [L]$}
    
    \STATE
    $\forall s\in \Ss: a^*[s] \gets  \qargmax_f\{q_{l-1,s}[a] \ : \ a\in \A \}$

    \STATE 
    $\forall s\in \Ss: \tilde{\pi}_{l}[s] \gets a^*[s]$, $\tilde{v}_{l}[s] \gets q_{l-1,s}[a^*[s]]$

    \STATE $\forall s\in \Ss$: 
    \algorithmicif \ $\tilde{v}_{l}[s] \geq v_{l-1}[s]$ \
    \algorithmicthen \ $v_{l}[s]\gets \tilde{v}_{l}[s]$, $\pi_{l}[s]\gets \tilde{\pi}_{l}[s]$

    \STATE 
    \algorithmicelse \  $v_{l}[s]\gets v_{l-1}[s]$, $\pi_{l}[s]\gets \pi_{l-1}[s]$ \ {\bfseries end if}

    \STATE $\forall s \in \Ss:
    \text{create quantum oracle encoding, } U_{z_{l,s}} \text{, of } z_{l,s}\in \mathbb{R}^A \text{ defined by}$

    $\hspace{37.5pt} z_{l,s}[a] \gets \EstI{f}{(Pv_{l})[s,a]}{(1-\gamma)\epsilon/4} - (1-\gamma)\epsilon/4$

    \STATE $\forall s \in \Ss: 
    \text{create quantum oracle encoding, } U_{q_{l,s}} \text{, of } q_{l,s}\in \mathbb{R}^A \text{ defined by}$

    $\hspace{37.5pt} q_{l,s}[a] \gets \max\{r[s,a]+\gamma z_{l,s}[a],0\}$
    \ENDFOR
\end{algorithmic}
\end{algorithm*}

\para{Quantum version of standard value iteration.} We will be discussing how the above subroutines can be used in the modern variant of value iteration by \cite{Sidford_NearOptimal_2018}. To warm up, consider how they can be applied to standard value iteration~\cite{KearnsSingh_PhasedQlearning_1999} to compute $v^*$. In standard value iteration, we start with $v_0$ set to the zero vector in $\mathbb{R}^\Ss$ and repeatedly update it by the Bellman recursion $v_i\gets \Tind(v_{i-1})$ where the Bellman operator $\Tind: \mathbb{R}^\Ss \to \mathbb{R}^\Ss$ is defined by
\begin{equation}
    \Tind(v_{i+1})[s] \coloneqq \max_{a}\left\{r[s,a] + \gamma \mathbb{E}\left[v_{i}[s'] \ \middle| \ s'\sim p(\cdot|s,a)\right]\right\},
\end{equation}
for all $s\in \Ss$. For convenience, we denote the mean $\mathbb{E}[v_{i}[s'] \ | \ s'\sim p(\cdot|s,a)]$ by $\mu_i$. Hypothetically, if this mean were computed exactly at each iteration, then this is a contraction map with contraction factor $\gamma$ and fixed point $v^*$. So after $t$ iterations, the error in the current iterate has dropped by a factor of $\gamma^t$. Neglecting log factors, after about $O(\Gamma)$ iterations, our iterate is $\eps$-close to $v^*$. In reality, we cannot compute $\mu_i$ exactly. But if we only require our final answer to be correct to error $\epsilon$, then it is reasonable to assume that estimating $\mu_i$ for each $i$ to error $O(\epsilon/\Gamma)$ suffices. If we make the reasonable assumption $\norm{v_i}\leq \norm{v^*}\leq \Gamma$ ($v_i$ is converging to $v^*$ after all) then classically doing this estimation at \emph{each} iteration uses $O(SA \Gamma^2/(\epsilon/\Gamma)^2) = O(SA \Gamma^4/\epsilon^2)$ samples classically by the Hoeffding bound. The factor $SA$ comes from the fact that an estimation is done for each $(s,a)\in \Ss\times \As$. Therefore, the overall classical sample complexity is of order $O(SA \Gamma^5/\epsilon^2)$. Though the preceding argument is non-rigorous, it does give the right answer (up to log-factors)~\cite{Sidford_SWWY_2021}. 

How would our quantum subroutines speed up standard value iteration? By using quantum mean estimation, we can quadratically suppress the sample complexity at each iteration and for each $(s,a)\in \Ss\times\As$, meaning that the quantum sample complexity at each iteration becomes $O(SA \sqrt{\Gamma^2/(\epsilon/\Gamma)^2}) = O(SA\Gamma^2/\epsilon)$. Accounting for the $\Gamma$ iterations, gives an overall quantum sample complexity of $O(SA \Gamma^3/\epsilon)$. In fact, observing that the Bellman recursion involves taking the maximum over the set of actions, we can use quantum maximum finding to reduce the complexity down further, to $O(S\sqrt{A} \Gamma^3/\epsilon^2)$, which matches the performance of \SolveMdpII\ for $v^*$. However, an
$\epsilon$-optimal value function leads only to an $(2\gamma \Gamma\epsilon)$-optimal greedy policy~\cite{SinghYee_PolicyLoss_1994,Bertsekas_Abstract_2013}.

\para{Quantum version of modern value iteration.} To obtain an $\epsilon$-optimal policy, \SolveMdpI\ and \SolveMdpII\ directly employ the so-called monotonicity technique of \cite{Sidford_NearOptimal_2018} which we observe does not interfere with our use of the two quantum subroutines. The monotonicity technique comprises the if-then-else statement and the subtractions in the lines involving \qEst. Note that the subtracted terms always equal the preceding estimation error which enforces one-sided error. Overall, the monotonicity technique ensures that the value function at each iteration is at most the value function \emph{of the policy} at that iteration (which in turn is at most $v^*$). Hence we avoid the problem of an $\epsilon$-optimal $\hat{v}$ not giving an $\epsilon$-optimal $\hat{\pi}$.

We can get better dependence in $\Gamma$ by leveraging two other techniques introduced in \cite{Sidford_NearOptimal_2018,Sidford_SWWY_2021,Wainwright_VarianceReduced_2019}: ``variance reduction'' and ``total variance''. We incorporate these techniques in \SolveMdpI\ at the cost of re-inflating the $A$ dependence back to linear. The reason we no longer get $\sqrt{A}$ is because applying $\qargmax$ is incompatible with the variance reduction technique.

Variance reduction essentially splits standard value iteration into $K \coloneqq \ceil{\log_2(\Gamma/\epsilon)}$ epochs where in each epoch we halve the error. Epochs in \SolveMdpI\ are indexed by $k$. At the $l$-th iteration of epoch $k$, we need to estimate $\mathbb{E}[v_{k,l}[s']]$, where $v_{k,l}$ is the current value function. The mean can be rewritten as
\begin{equation}\label{eq:variance_reduction}
    \mathbb{E}[v_{k,l}[s']] =  \mathbb{E}[(v_{k,l}-v_{k,0})[s']] + \mathbb{E}[v_{k,0}],
\end{equation}
where $v_{k,0}$ is the value function at the start of the epoch. There are $SA$ of these equations, one corresponding to each $(s,a)\in \Ss \times \As$ such that $s'\sim p(\cdot|s,a)$. We estimate the mean on the left-hand-side (LHS) by the sum of estimates of means on the right-hand-side (RHS). Since $\norm{v_{k,l}-v_{k,0}}$ decreases rapidly with $k$, because $v_{k,l}$ and $v_{k,0}$ rapidly approach $v^*$, we ignore the first term on the RHS in our overview. We remark that its estimation cost affects the $\epsilon$ range for which \SolveMdpI\ is optimal. Consider the second term, $\mathbb{E}[v_{k,0}]$. This again needs to be estimated to error $\epsilon/\Gamma$ which classically costs $O(SA\Gamma^2/(\epsilon/\Gamma)^2) = O(SA\Gamma^4/\epsilon^2)$ by the same argument before. Quantumly, this costs $O(SA\Gamma^2/\epsilon)$, again as before. Now, the key point is that we only need to estimate $\mathbb{E}[v_{k,0}]$ \emph{once per epoch} and reuse its value throughout the epoch. As there are only logarithmically many epochs, the overall cost becomes $\widetilde{O}(SA\Gamma^4/\epsilon^2)$ classically and $\widetilde{O}(SA\Gamma^2/\epsilon)$ quantumly.

The total variance technique is more subtle. It is based on the observation that the actual error accumulation from iteration to iteration is much less than what is implicit above. To be clear, in the above, we set the error in mean estimation at each iteration to be $\epsilon/\Gamma$ so that over $\Gamma$ iterations, the accumulated error is $\epsilon$. However, the error at each iteration $i$ can actually be set larger, to $\epsilon\sqrt{\var[v_{i}[s']]}/\Gamma^{1.5}$ (which could be as large as $\epsilon/\sqrt{\Gamma}$), and it can still be shown that the overall accumulated error is $\epsilon$ using properties of the standard deviation. More specifically, let us write $\sigma_i = \sqrt{\var[v_{i}[s']]}$. Then, the cumulative standard deviation, $\sum_{i=1}^\Gamma\sigma_i$, is closely related to an expression for which we can non-trivially upper bound by $\sqrt{2}\,\Gamma^{1.5}$~(\cref{thm:azar}). Classically, it is straightforward to estimate $\mu_i \, (\coloneqq \mathbb{E}[v_i[s']])$ to an error of $\epsilon \sigma_i/\Gamma^{1.5}$, without needing to know the $\sigma_i$s. This can be done using about $O((\epsilon/\Gamma^{1.5})^{-2}) = O(\Gamma^3/\epsilon^2)$ samples for each state--action pair as guaranteed by Chebyshev's (or Bernstein's) inequality. Combined with variance reduction, that is, applying the above technique to estimate the $\mathbb{E}[v_{k,0}]$ from before, we see that this yields an overall classical sample complexity of $\widetilde{O}(SA\Gamma^3/\epsilon^2)$. This is one main result of \cite{Sidford_NearOptimal_2018}. Due to the first term on the RHS of \cref{eq:variance_reduction}, which we glossed over, this result only holds for $\epsilon=O(1)$.

Trying to do a quantum version of the total variance technique poses a significant technical challenge for the following reason. The version of quantum mean estimation that should have corresponded to a more efficient Chebyshev's inequality, namely \qEstII, is deficient compared to its classical counterpart in two ways. The first is that \qEstII\ cannot estimate $\mu_i$ to an error proportional to $\sigma_i$ without knowing $\sigma_i$ a priori. To remedy this, we first estimate $\sigma_i$ using \qEstI\ to some additive error $b>0$. Denote the estimate by $\hat{\sigma}_i$. Then we can use \qEstII\ to estimate $\mu_i$ to error proportional to $\sigmalow \coloneqq \hat{\sigma}_i-b \,(\leq \sigma_i)$ which maintains correctness. Unfortunately, this approach does not work due to the second deficiency of \qEstII. In fact, \qEstII\ also requires an upper bound $C$ on $\sigma_i$ to function and uses $O(C/\epsilon)$ samples to guarantee additive error $\epsilon$. For large $C$, the sample complexity can be highly redundant with respect to the error guaranteed. This problem is directly relevant for us if we try to use $\sigmahigh \coloneqq \hat{\sigma}_i+b$ as $C$. Then, the complexity becomes proportional to $C/\sigmalow = (\hat{\sigma_i}+b)/(\hat{\sigma}_i-b)$, which can be arbitrarily large depending on the value of $\hat{\sigma}_i$ that we cannot control. To remedy this second problem, we in fact estimate $\mu_i$ to error proportional to $\sigmahigh$, so that $C/\sigmahigh = 1$ becomes constant. Of course, this no longer maintains correctness as $\sigmahigh$ is larger than $\sigma_i$. However, we can bound $\sigmahigh \leq \sigma_i + 2b$. We then find, by performing a full correctness analysis, that the extra error of $2b$ can be sufficiently suppressed if we set $b$ and the parameter $c$ on Line 3 of \SolveMdpI\ to be small enough constants. Doing so only increases the overall complexity by a constant factor. Setting $b$ constant also ensures that the complexity of estimating $\sigma_i$ to error $b$ by \qEstI\ is within our budget. With the technical challenges resolved, we see that the complexity of \SolveMdpI\ is $\widetilde{O}(SA(\epsilon/\Gamma^{1.5})^{-1}) = \widetilde{O}(SA\Gamma^{1.5}/\epsilon)$. Again, due to the first term on the RHS of \cref{eq:variance_reduction}, this only holds for $\epsilon = O(1/\sqrt{\Gamma})$. The $\epsilon$ range is smaller than before, which was $\epsilon=O(1)$, because there is relatively less quantum speedup for estimating that first term. (Note added: subsequently to the conference version of this work appearing~\cite{ConferenceVersion_Rl_2021}, Hamoudi~\cite[Theorem 13]{Hamoudi_SubGaussian_2021} removed the deficiencies of quantum mean estimation, as described in this paragraph, in general.)

In summary, we have described \SolveMdpI, which uses \qEst\ to ``quantize'' all three techniques in \cite{Sidford_NearOptimal_2018}: monotonicity, variance reduction, and total variance. Quantizing the first two is not difficult but quantizing the last one offers a technical challenge. We believe that our solution to that challenge could find uses in quantizing other classical algorithms as well. We have also described \SolveMdpII, which offers a quadratic speedup in $A$ using $\qargmax$. But because $\qargmax$ conflicts with the variance reduction and total variance techniques, \SolveMdpII\ no longer has optimal $\Gamma$ dependence.

\para{Lower bound techniques.} Lastly, we discuss how we prove our lower bounds. Standard techniques for proving lower bounds on the number of uses of a quantum oracle generally work with Boolean oracles. In our case, we instead have an oracle $\Oracle$ that outputs a particular quantum state for a given state--action pair which can also be invoked in superposition over state--action pairs. To enable the use of standard lower bound techniques from quantum query complexity, we reduce the problems of computing certain Boolean functions $f$ to our problems of computing $q^*$, $v^*$, and $\pi^*$ by instantiating our oracle $\Oracle$ using standard Boolean oracles. For example, consider a quantum oracle $\Oracle_{\text{coin}}$ that produces a state which represents a quantum sample of a coin toss with probability $p$ of getting heads. $\Oracle_{\text{coin}}$  can be instantiated by a Boolean oracle encoding a $n$-bit string (for large $n$) which has $p$ fraction of its bits equal to $1$. The reduction then allows us to translate known lower bounds on computing $f$ using a Boolean oracle to lower bounds on computing $q^*$, $v^*$, and $\pi^*$ using oracle $\Oracle$.

This approach has some unexpected benefits. Because we reduce to standard problems in query complexity, our proof is very modular. It allows us to also show optimal \emph{classical} lower bounds by simply invoking the best classical lower bounds for the Boolean functions $f$ mentioned above. Moreover, we qualitatively improve on known classical lower bounds. The known lower bound of \cite{AzarMunosKappen_MdpGenerative_2012} shows that for any $S$, $A$, there exists a hard MDP which has a number of state--action pairs equal to $SA$. However, it is not the case that their constructed MDP has $S$ states and $A$ actions, just that the total number of state--action pairs is $SA$. Their constructed MDP actually has $O(SA)$ states, but most states only have $O(1)$ actions, so the total number of state--action pairs is $SA$. In contrast, our hard MDP instance genuinely has $S$ states and $A$ actions.

\subsection{Related Work}

As we have discussed, our quantum algorithms can be viewed as ``quantizations'' of the classical algorithms and techniques in \cite{Sidford_NearOptimal_2018, Sidford_SWWY_2021,Wainwright_VarianceReduced_2019} which represent the latest development of classical \emph{model-free} MDP solvers, which also recently include \cite{Wang_PrimalDual_2017,Wang_Randomized_2020,Sidford_Mirror_2020} among others, that started with \cite{KearnsSingh_PhasedQlearning_1999}. \cite{Sidford_NearOptimal_2018} give algorithms with complexity $\widetilde{O}(SA\Gamma^3/\epsilon^2)$ when $\epsilon=O(1)$ for approximating all three of $q^*$, $v^*$, and $\pi^*$. On the model-free side, there has been even more recent progress culminating in the work of \cite{Li_TightUpper_2020} which achieves $\widetilde{O}(SA\Gamma^3/\epsilon^2)$ for the full range of $\epsilon \in (0,\Gamma]$. That this bound is tight (up to log-factors) is established by \cite{AzarMunosKappen_MdpGenerative_2012} which is closely related to our work. Indeed, to prove our lower bounds, we use an instance inspired by \cite{AzarMunosKappen_MdpGenerative_2012}. However, our proof by reduction and composition theorems is technically quite different from theirs and extends their lower bound to apply to arbitrary $S$ and $A$. Arguably, model-based MDP solvers~\cite{AzarMunosKappen_MdpGenerative_2012,Agarwal_MinimaxOptimal_2020,Li_TightUpper_2020} have seen more successes than their model-based counterparts that we quantized. However, quantizing these techniques appears more difficult. As a first step, one might ask if the quantum sample complexity of learning a probability distribution supported on $n$ points to error $\epsilon$ in $\ell_1$-norm can be $O(n/\epsilon)$, which represents a quadratic speedup over classical in terms of $\epsilon$. Recently, this question has been answered affirmatively~\cite{VanApeldoorn_MultiDimension_2021,Cornelissen_MultiMonte_2021} which immediately implies an $\widetilde{O}(S^2A\Gamma^2/\epsilon)$ model-based quantum algorithm for $q^*$ due to \cite[Proposition~2.1]{AgarwalJiangKakadeSun_Reinforcement_2021}. However, this complexity is highly suboptimal and it remains to be seen whether we could eventually obtain an optimal model-based quantum algorithm for any one of $q^*$, $v^*$, or $\pi^*$.

On the quantum side, the broader subject of reinforcement learning ``remains relatively unaddressed by the quantum community''~\cite{Jerbi_Rl_2020}. The relatively few works on the subject include \cite{Dong_QuantumRl_2008,Dunjko_MachineLearning_2016,Paparo_ActiveLearning_2014, Dunjko_Advances_2017,
Jerbi_Rl_2020}. However, these works are incomparable to ours as they focus either on problem formulation or lack rigorous results. None give rigorous complexity bounds on computing $\pi^*$, $v^*$, and $q^*$. Some of these works do mention the possibility of quadratic speedups by using quantum maximum finding~\cite{Dunjko_MachineLearning_2016}. However, they do not consider how this technique would work within an integrated algorithm. As we have mentioned, our work shows that to achieve optimal $\Gamma$-dependence overall, we may have to forgo the use of quantum maximum finding. We note that in the multi-armed bandits setting, where $S=1$, an instance-optimal quadratic quantum speedup is shown in \cite{WangYouLiChilds_Bandits_2021}.  In synergy with our work, \cite{Dunjko_Advances_2017} proposes methods to instantiate the quantum generative model in real physical environments as opposed to being given a classical simulator. If their methods can be realized, our work will have wider applicability.

\section{Preliminaries}
\label{sec:preliminaries}

\subsection{Notation}

For a positive integer $n$, we write $[n]$ for the set $\{1,\dots, n\}$. We use upper case letters for matrices and lower case letters for vectors. For vectors only, we use square bracket notation $v[i]$ to mean entry $i$ of vector $v$. Vectors $v$ appearing in this work often have indices $i=(i_1,i_2)$ described by two coordinates in which case we write $v[i_1,i_2]$ to mean $v[(i_1,i_2)]$. As a function $v: X \to Y$ can be identified with the corresponding vector $v\in Y^X$, we also use square bracket notation to index into functions. For any two real vectors $u,v$ of the same dimension, we write $\max \{u,v\}$ to mean the element-wise max of $u$ and $v$ and $u\leq v$ to mean the inequality holds element-wise. We write bold $\one$ (resp.~$\zero$) for a vector of all $1$s (resp~$0$s) with dimension determined by context. A scalar $x\in \mathbb{R}$ appearing alone in an equation involving vectors is to be interpreted as $x\cdot \one$. For a function $f: A \to B$ and vector $v$ with entries in $A$, we write $f(v)$ for the vector with entries in $B$ resulting from applying $f$ to $v$ element-wise. For a set $X$, we often identify $X^S$ with $X^\Ss$, $X^A$ with $X^\As$, $X^{S\times A}$ with $X^{\Ss\times \As}$, and so on.

\subsection{MDP Preliminaries}

For a policy $\pi$, we define $P^{\pi} \in \mathbb{R}^{SA\times SA}$ to be the matrix with entries
\begin{align}
    P^{\pi}_{(s,a),(s',a')} = \begin{cases}
        p(s'|s,a) & \text{if } a' = \pi(s'),\\
        0 & \text{otherwise}.
    \end{cases}
\end{align}
We define $P \in \mathbb{R}^{SA\times S}$ to be the matrix with entries $P_{(s,a),s'} = p(s'|s,a)$ and, for fixed $(s,a)\in \Ss\times \As$, we define $p_{s,a}\in \mathbb{R}^S$ to be the vector with entries $p_{s,a}[s'] = p(s'|s,a)$.
The preceding definitions mean that, for any $u\in \mathbb{R}^S$, we have $(Pu)[s,a] = p_{s,a}\trans u$.

For $u\in \mathbb{R}^S$, we define $\sigma^2(u) \in \mathbb{R}^{SA}$ to be a vector with entries $\sigma^2(u)[s,a] \coloneqq \var[u[s'] \, | \, s'\sim p(\cdot\ | \ s,a)]$. Note that this means $\sigma^2(u) = Pu^2 - (Pu)^2$. Naturally, we write $\sigma(u) \coloneqq \sqrt{\sigma^2(u)}$. 

We define the \emph{value operator of policy} $\pi$, $\T{\pi}:\mathbb{R}^S\to\mathbb{R}^S$, by its mapping of $u\in \mathbb{R}^S$, defined entry-wise by
\begin{equation}
    \T{\pi}(u)[s] \coloneqq r(s,\pi[s])+\gamma \, p_{s,\pi[s]}\trans u.
\end{equation}
It can be readily verified that  $\T{\pi}$ (for any $\pi$) is monotonically increasing with respect to the element-wise order ($\leq$) on $\mathbb{R}^S$, is a $\gamma$-contraction with respect to the $l_\infty$-norm on $\mathbb{R}^S$, and has unique fixed point $v^{\pi}$.

For a vector $q\in \mathbb{R}^{SA}$, we also define $v(q)\in \mathbb{R}^{S}$ and $\pi(q) \in \As^{S}$ by $v(q)[s] = \max_{a}\{q[s,a]\}$ and $\pi(q)[s] = \argmax_a\{q[s,a]\}$ respectively. Note that this means $v(q)[s] = q[s,\pi(q)[s]]$.

Finally, for the total-variance technique, we will also need:
\begin{thm}\cite{AgarwalJiangKakadeSun_Reinforcement_2021, AzarMunosKappen_MdpGenerative_2012}
\label{thm:azar}
For any policy $\pi$, we have
\begin{equation}\label{eq:azar}
    \norm{(I - \gamma P^{\pi})^{-1}\sigma(v^{\pi})} \leq \sqrt{2}/\Gamma^{1.5}.
\end{equation}
\end{thm}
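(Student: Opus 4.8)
The plan is to peel the square root off the variance vector with a Cauchy--Schwarz step and then control the remaining variance term through the law of total variance. Throughout set $M := (I-\gamma P^{\pi})^{-1} = \sum_{t\ge 0}\gamma^t(P^{\pi})^t$, which is entrywise nonnegative, and note that each row of $P^{\pi}$ sums to $1$, so $P^{\pi}\one = \one$ and hence $M\one = \Gamma\one$.

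First I would apply a rowwise Cauchy--Schwarz: for a nonnegative matrix $M$ and nonnegative vector $x$, $(M\sqrt{x})[s,a] = \sum_{(s',a')}M_{(s,a),(s',a')}\sqrt{x[s',a']} \le \sqrt{(M\one)[s,a]}\,\sqrt{(Mx)[s,a]}$. Taking $x = \sigma^2(v^{\pi})$ so that $\sqrt{x} = \sigma(v^{\pi})$ and using $M\one = \Gamma\one$ gives $\norm{M\sigma(v^{\pi})} \le \sqrt{\Gamma}\cdot\sqrt{\norm{M\sigma^2(v^{\pi})}}$. It therefore suffices to prove $\norm{(I-\gamma P^{\pi})^{-1}\sigma^2(v^{\pi})} = O(\Gamma^2)$, as the target $\sqrt{2}\,\Gamma^{1.5}$ then follows.

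The heart of the matter is bounding this variance term, which I would do via the law of total variance. For the companion resolvent at discount $\gamma^2$, the quantity $[(I-\gamma^2 P^{\pi})^{-1}\sigma^2(v^{\pi})][s,a]$ is a discounted sum over the $\pi$-trajectory started at $(s,a)$ of the one-step conditional variances $\sigma^2(v^{\pi})[s_t,a_t] = \mathrm{Var}[v^{\pi}[s_{t+1}]\mid s_t,a_t]$. Applying the law of total variance to the recursion $G = r[s,a] + \gamma G'$ for the total discounted return $G = \sum_{t\ge 0}\gamma^t r[s_t,a_t]$ yields Sobel's identity $W = \gamma^2\sigma^2(v^{\pi}) + \gamma^2 P^{\pi}W$ for the return-variance vector $W[s,a] := \mathrm{Var}[G\mid s_0 = s, a_0 = a]$, i.e.\ $W = \gamma^2(I-\gamma^2 P^{\pi})^{-1}\sigma^2(v^{\pi})$. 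Since $G \in [0,\Gamma]$, Popoviciu's inequality gives $\norm{W} \le \Gamma^2/4$, so $\norm{(I-\gamma^2 P^{\pi})^{-1}\sigma^2(v^{\pi})} = O(\Gamma^2)$.

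The one real obstacle is that Cauchy--Schwarz produced the resolvent at discount $\gamma$ while the clean total-variance bound lives at discount $\gamma^2$; since $\gamma^t \ge \gamma^{2t}$, the $\gamma$-resolvent is the larger object and the $\gamma^2$-bound does not transfer for free. I would bridge them with the identity $(I-\gamma P^{\pi})^{-1} = (I-\gamma^2 P^{\pi})^{-1}\bigl(I + \gamma(1-\gamma)P^{\pi}(I-\gamma P^{\pi})^{-1}\bigr)$, which holds because all factors are power series in $P^{\pi}$ and therefore commute. Applying it to $\sigma^2(v^{\pi})$, writing $Y := (I-\gamma P^{\pi})^{-1}\sigma^2(v^{\pi}) \ge 0$ and $Z := (I-\gamma^2 P^{\pi})^{-1}\sigma^2(v^{\pi}) \ge 0$, and using that $(I-\gamma^2 P^{\pi})^{-1}P^{\pi}$ is nonnegative with all row sums equal to $\tfrac{1}{1-\gamma^2}$, I get $\norm{Y} \le \norm{Z} + \tfrac{\gamma}{1+\gamma}\norm{Y}$, hence $\norm{Y} \le (1+\gamma)\norm{Z} = O(\Gamma^2)$. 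Feeding this into the Cauchy--Schwarz bound and tracking constants delivers the claimed $O(\Gamma^{1.5})$ estimate. I expect the discount-bridging step, plus the routine bookkeeping needed to land the precise constant $\sqrt{2}$ (and to dispose of small $\gamma$, where the crude bound $\norm{M\sigma(v^{\pi})} \le \tfrac{\Gamma}{2}\norm{M\one} = \tfrac{\Gamma^2}{2}$ already suffices), to be where the actual work lies.
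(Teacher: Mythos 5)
Your proposal is correct, with one caveat about the statement itself: the displayed bound in \cref{thm:azar} contains a typo---it should read $\sqrt{2}\,\Gamma^{1.5}$ rather than $\sqrt{2}/\Gamma^{1.5}$ (as the technical overview confirms: the cumulative standard deviation is ``non-trivially upper bound[ed] by $\sqrt{2}\,\Gamma^{1.5}$'')---and it is this corrected bound that you prove. The paper itself gives no proof of this theorem; it imports it from the cited references, and your argument is essentially the standard one found there: the rowwise Cauchy--Schwarz (Jensen) step using $(I-\gamma P^{\pi})^{-1}\one = \Gamma\one$, Sobel's identity / law of total variance expressing the return-variance vector as $\gamma^{2}(I-\gamma^{2}P^{\pi})^{-1}\sigma^{2}(v^{\pi})$ with the Popoviciu bound $\Gamma^{2}/4$, and the bridging identity $(I-\gamma P^{\pi})^{-1}(I-\gamma^{2}P^{\pi}) = I + \gamma(1-\gamma)P^{\pi}(I-\gamma P^{\pi})^{-1}$ to move between the two discounts. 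Your constant-tracking also checks out: for $\gamma \geq 1/2$ the chain gives $\norm{(I-\gamma P^{\pi})^{-1}\sigma(v^{\pi})} \leq \Gamma^{1.5}\sqrt{1+\gamma}/(2\gamma) \leq \sqrt{1.5}\,\Gamma^{1.5} < \sqrt{2}\,\Gamma^{1.5}$, and for $\gamma < 1/2$ (so $\Gamma < 2 \leq 8$) the crude bound $\Gamma^{2}/2 \leq \sqrt{2}\,\Gamma^{1.5}$ suffices, exactly as you indicate.
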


\subsection{Quantum Preliminaries}
\label{sec:quantum-prelim}

We now describe quantum oracles in more detail using standard quantum notation (Dirac notation). We briefly review this notation so that the following definitions make sense and refer readers to \cite{NielsenChuang_QuantumComputation_2000} for more information.

In Dirac notation, vectors $v \in \mathbb{C}^n$ are written as as $\ket{v}$, and called ``ket $v$''. The notation $\ket{i}$, with $i\in [n]$, is reserved for the $i$-th standard basis vector. $\ket{0}$ is also reserved for the $1$st standard basis vector when there is no conflict. A ket $\ket{i_1 i_2\dots i_M}$ with $i_j\in\{0,1\}$ is interpreted as the vector $\ket{i+1}\in\mathbb{C}^{2^M}$, where $i$ is the integer that is represented by $i_1\dots i_M$ in binary.

\begin{defn}[Quantum oracle encoding of functions and vectors]
Let $\Omega$ be a finite set of size $n$ and $u\in \mathbb{R}^{\Omega}$ (equivalently, $u: \Omega \rightarrow \mathbb{R}$) where, for all $i\in \Omega$, $u_i$ is represented by an $M$-bit string $\bar{u}_i$. A quantum oracle encoding $u$ is a unitary matrix $U_u: \mathbb{C}^n\otimes \mathbb{C}^{2^M} \rightarrow \mathbb{C}^n\otimes \mathbb{C}^{2^M}$ such that $U_u: \ket{i}\otimes \ket{0} \mapsto \ket{i}\otimes \ket{\bar{u}_i}$ for all $i\in [n]$.
\end{defn}

Like in the classical setting, we may always assume that $M$ is sufficiently large for our purposes.

\begin{defn}[Quantum oracle encoding of probability distributions]
Let $\Omega$ be a finite set of size $n$ and $p=(p_x)_{x\in \Omega}$ a discrete probability distribution on $\Omega$. The quantum oracle encoding of $p$ is a unitary matrix $U_p: \mathbb{C}^n\otimes \mathbb{C}^{J} \rightarrow \mathbb{C}^n\otimes \mathbb{C}^{J}$ such that $U_p: \ket{0}\otimes \ket{0}  = \sum_{x\in \Omega} \sqrt{p_x} \, \ket{x}\otimes \ket{v_{s'}}$, where $0\leq J \in \mathbb{Z}$ is arbitrary and $\ket{v_{s'}}\in \mathbb{C}^J$ is arbitrary. 
\end{defn}

\begin{defn}[Quantum generative model of an MDP]\label{def:generative_oracle} The quantum generative model of an MDP, with transition probabilities $p(s'|s,a)$, is a unitary matrix $\Oracle: \mathbb{C}^S\otimes \mathbb{C}^A \otimes \mathbb{C}^S \otimes \mathbb{C}^J\rightarrow \mathbb{C}^S\otimes \mathbb{C}^A \otimes \mathbb{C}^S \otimes \mathbb{C}^J$ such that
\begin{equation}\label{eq:generative_oracle}
\begin{aligned}
    &\Oracle: \ket{s} \otimes \ket{a} \otimes \ket{0} \otimes \ket{0}\\
    &\mapsto \ket{s} \otimes \ket{a} \otimes \Big(\sum_{s'\in \Ss}\sqrt{p(s'|s,a)}\, \ket{s'}\otimes \ket{\psi_{s',s,a}}\Big),
\end{aligned}
\end{equation}
where $0\leq J \in \mathbb{Z}$ is arbitrary and
$\ket{\psi_{s',s,a}}\in \mathbb{C}^J$ is arbitrary.
\end{defn}

We stress that the quantum state output by $\Oracle$ in \cref{eq:generative_oracle} is analogous to a \textit{sample} drawn from the classical probability distribution $\{p(s'|s,a)\}_{s'\in \Ss}$ as opposed to that distribution fully written out on a piece of paper. In \cref{sec:app_oracle_construction}, we describe how to systematically and efficiently construct the quantum generative model from a circuit for a classical generative model. This construction is already implicit in, for example, \cite{Montanaro_MonteCarlo_2015, Hamoudi_Chebyshev_2019, Belovs_Quantum_2019}, but we provide a description for completeness.

\section{Analysis of Quantum Algorithms}

In this section, we formally analyze our two algorithms \SolveMdpI\ and \SolveMdpII. These algorithms make essential use of two quantum subroutines: quantum mean estimation and quantum maximum finding. We begin by specifying the performance guarantees of these subroutines.

\subsection{Quantum Mean Estimation and Maximum Finding}

\begin{thm}[Quantum mean estimation~\cite{Brassard_AmplitudeEstimation_2000,Montanaro_MonteCarlo_2015}]~\label{thm:montanaro_monte_carlo_mean}
There are two quantum algorithms \AI\ and \AII\ with the following specifications. Let $\Omega$ be a finite set, $p = (p_x)_{x\in \Omega}$ a discrete probability distribution on $\Omega$, and function $v: \Omega \rightarrow \mathbb{R}$. Given quantum oracles $U_p$ and $U_v$  encoding $p$ and $v$ respectively. Then, 
\begin{enumerate}[topsep=0pt,itemsep=0ex,partopsep=1ex,parsep=1ex]
\item \AI\ requires $u,\epsilon>0$ as additional inputs and a promise $\zero\leq v \leq u$, in which case $\AI$ uses $O(u/\epsilon + \sqrt{u/\epsilon})$ queries to $U_p$, alternatively
\item \AII\ requires $\sigma>0$ and $\epsilon \in (0,4\sigma)$ as additional inputs and a promise $\var[v(x)\, | \, x\sim p] \leq \sigma^2$, in which case $\AII$ uses $O((\sigma/\epsilon) \log^{2}(\sigma/\epsilon))$ queries to $U_p$
\end{enumerate}
to output an estimate $\hat{\mu}'$ of $\mu \coloneqq \expect[v[x]\, | \, x \sim p] = p\trans v$ with
$\pr(\abs{\hat{\mu}'-\mu} > \epsilon) < 1/3$. Moreover, by repeating one of $\AI$ or $\AII$ $O(\log(1/\delta))$ times and taking the median output yields an estimate $\hat{\mu}$ of $\mu$ with $\pr(\abs{\hat{\mu}-\mu} < \epsilon) > 1-\delta$.
\end{thm}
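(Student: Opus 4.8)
Both $\AI$ and $\AII$ are instances of Montanaro's quantum Monte Carlo framework~\cite{Montanaro_MonteCarlo_2015} built on amplitude estimation~\cite{Brassard_AmplitudeEstimation_2000}, so the plan is to reduce each to a single application of amplitude estimation (possibly with a dyadic decomposition), and then boost the success probability by a median. The common first step is to assemble, from the two given oracles, a single state-preparation unitary. Composing $U_p$ with $U_v$ produces $\ket{0}\ket{0}\ket{0}\mapsto \sum_{x\in\Omega}\sqrt{p_x}\,\ket{x}\ket{\bar{v}_x}\ket{0}$, and then a rotation on the final ancilla controlled on the value register $\ket{\bar{v}_x}$ prepares an amplitude encoding $\mu$. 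Crucially, each Grover iterate of amplitude estimation uses only $O(1)$ calls to $U_p$ (and to the cheap classical oracle $U_v$), so the whole argument reduces to counting how many amplitude-estimation iterations are needed, and I will state all complexities as queries to $U_p$.

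\textbf{The bounded estimator $\AI$.} Using the promise $\zero\le v\le u$, I would rescale to $w=v/u\in[0,1]$ and arrange the controlled rotation so that measuring the ancilla yields outcome $1$ with probability exactly $a=\sum_{x}p_x\,v(x)/u=\mu/u$. Amplitude estimation with $M$ iterates returns $\hat a$ with $\abs{\hat a-a}\le 2\pi\sqrt{a(1-a)}/M+\pi^2/M^2$ with probability at least $8/\pi^2>2/3$. Setting $\hat\mu'=u\hat a$, I want additive error $\eps$ in $\mu$, i.e.\ error $\eps/u$ in $a$: the second term forces $M=O(\sqrt{u/\eps})$, and the first (using $\sqrt{a(1-a)}\le 1/2$) forces $M=O(u/\eps)$. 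This is exactly where the two summands $O(u/\eps+\sqrt{u/\eps})$ come from, and the amplitude-estimation success guarantee gives the claimed $\pr(\abs{\hat\mu'-\mu}>\eps)<1/3$.

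\textbf{The variance-bounded estimator $\AII$.} Here I would follow Montanaro's bounded-variance algorithm, which reduces to the bounded case by a dyadic decomposition of $v$ into magnitude shells $\{x:\abs{v(x)-\mu_0}\in[2^k,2^{k+1})\}$ around a rough initial estimate $\mu_0$. The variance promise $\var[v(x)\mid x\sim p]\le\sigma^2$ bounds the total mass (and hence the contribution to $\mu$) carried by the high-magnitude shells, so only $O(\log(\sigma/\eps))$ shells matter, and the restriction $\eps\in(0,4\sigma)$ keeps the target error in the meaningful regime. Applying $\AI$ to each truncated shell and summing the estimates, with each shell boosted to sufficiently high confidence before summing, yields the total budget $O((\sigma/\eps)\log^2(\sigma/\eps))$ queries to $U_p$, the two logarithmic factors coming respectively from the number of shells and from the per-shell amplification.

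\textbf{Median amplification and the main obstacle.} Finally, for either algorithm I would run $T=O(\log(1/\delta))$ independent copies and output the median $\hat\mu$. The median lies outside $[\mu-\eps,\mu+\eps]$ only if at least $T/2$ of the runs do, an event of probability at most $\delta$ for $T=O(\log(1/\delta))$ by a Chernoff bound, and the query cost multiplies by $T$. I expect the main technical obstacle to be $\AII$: carefully bookkeeping how the per-shell additive errors and failure probabilities aggregate across the dyadic decomposition (this is what produces the $\log^2$ factor rather than a single $\log$), and verifying that the variance promise alone suffices to control the truncation error without any a priori knowledge of $\mu$. By contrast, $\AI$ and the median step are essentially a direct invocation of the amplitude-estimation error bound plus a standard concentration argument.
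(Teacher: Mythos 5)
Your proposal is correct and follows essentially the same route as the paper, which does not prove this theorem itself but imports it from the cited references: amplitude estimation \cite{Brassard_AmplitudeEstimation_2000} with the error bound $2\pi\sqrt{a(1-a)}/M + \pi^2/M^2$ for the bounded estimator, Montanaro's dyadic-shell decomposition with a rough initial estimate for the variance-bounded estimator \cite{Montanaro_MonteCarlo_2015}, and the median ``powering lemma'' \cite{Jerrum_Powering_1986} for the confidence boost. Your accounting of where each term in the query bounds comes from (the two summands in $O(u/\epsilon+\sqrt{u/\epsilon})$, and the two logarithmic factors in the $\qEstII$ bound) matches the cited proofs, so there is nothing to add.
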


For $i\in\{1,2\}$, we write $\qEsti_{\delta}(p\trans v,\epsilon)$ for an estimate of the mean of $v[x]$, with $x$ distributed as $p$, to error $<\epsilon$ with probability $>1-\delta$, using $\qEsti$. 

The median-of-means part of \Cref{thm:montanaro_monte_carlo_mean} is sometimes referred to as the ``powering lemma''~\cite{Jerrum_Powering_1986}.

\begin{thm}[Quantum maximum finding~\cite{Durr_MinFinding_1996}]~\label{thm:min_finding}
There exists a universal constant $c_{max}>0$ such that the following holds. There is a quantum algorithm $\qargmax$ such that, given a quantum oracle $U_u$ encoding a vector $u\in \mathbb{R}^n$, $\mathcal{A}_{\max}$ at most $c_{\max}\sqrt{n}\log(1/\delta)$ queries to $U_u$ and finds $\mathrm{argmax}_i(u_i)$ with probability $>1-\delta$.
\end{thm}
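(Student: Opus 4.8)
The plan is to establish this as the quantum maximum-finding algorithm of \cite{Durr_MinFinding_1996}, whose engine is amplitude amplification (Grover search) steered by an adaptive threshold. First I would reduce the task to a sequence of Grover searches. We maintain a classical candidate index $y$ together with its value $\theta = u_y$, initialized by picking $y$ uniformly at random and reading $u_y$ with one query to $U_u$. From $U_u$ one builds, for the classical threshold $\theta$, a marking oracle $\mathcal{O}_\theta$ that flags exactly the indices $i$ with $u_i > \theta$: compute $u_i$ into an ancilla with one call to $U_u$, compare it against the classically stored $\theta$, write the Boolean result into the phase, and uncompute with a second call to $U_u$; hence each use of $\mathcal{O}_\theta$ costs $O(1)$ queries. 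The algorithm then repeatedly runs the amplitude-amplification search that handles an \emph{unknown} number of marked elements (the exponential-schedule search of Boyer, Brassard, H{\o}yer, and Tapp): if it returns an index $i$ with $u_i > \theta$, we update $y \gets i$ and $\theta \gets u_i$; we cap the total number of Grover iterations at $\Theta(\sqrt{n})$ and output the final $y$.

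Second, I would bound the \emph{expected} total number of queries. Rank the indices by value, with rank $1$ the maximizer and rank $r$ the $r$-th largest. The two quantitative inputs are: (i) when the current candidate has rank $r$, the marked set (strictly larger elements) has size $r-1$, so the Boyer--Brassard--H{\o}yer--Tapp search finds one of them in $O(\sqrt{n/(r-1)})$ expected queries; and (ii) the probability that the rank-$r$ element ever becomes the candidate is exactly $1/r$. For (ii), observe that the candidate sequence is strictly increasing in value, so once a candidate lands in the top-$r$ set (ranks $1,\dots,r$) all later candidates skip over rank $r$; thus rank $r$ is visited iff it is the \emph{first} element of the top-$r$ set to be visited. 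Since the initial choice and each search return a uniformly random element of the eligible set, conditioning on the first entry into the top-$r$ set makes that entry uniform over the top-$r$ elements, giving probability $1/r$. Summing the per-rank cost weighted by its probability of occurring yields expected query count of order $\sqrt{n}\sum_{r=2}^{n} \frac{1}{r\sqrt{r-1}} = O(\sqrt{n})$, because $\sum_r r^{-3/2}$ converges.

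Finally, I would convert the expected-query bound into the stated high-probability guarantee. By Markov's inequality, running with the iteration cap set to a suitable constant times $\sqrt{n}$ returns the true argmax with probability at least some absolute constant $p_0 > 0$. Repeating the whole procedure $O(\log(1/\delta))$ times independently and outputting the index of largest observed value fails only if every repetition fails, which occurs with probability at most $(1-p_0)^{O(\log(1/\delta))} < \delta$; the total query count is then $c_{\max}\sqrt{n}\log(1/\delta)$ for an absolute constant $c_{\max}$. The main obstacle is the expected-query analysis of the middle paragraph: one must correctly handle the search primitive when the number of marked elements is unknown (including the eventual case of zero marked elements, which is precisely why the total-iteration cap is needed), and justify the $1/r$ rank probability rigorously rather than heuristically. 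The remaining pieces---the $O(1)$-query construction of $\mathcal{O}_\theta$ and the probability boosting---are routine.
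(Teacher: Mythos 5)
Your proposal is correct: it is precisely the standard D\"urr--H{\o}yer argument (BBHT search above an adaptive threshold, the $1/r$ rank-visitation lemma, the convergent sum $\sum_r r^{-3/2}$, then Markov plus $O(\log(1/\delta))$ repetitions), which is exactly what the paper relies on, since the paper states this theorem as a citation to \cite{Durr_MinFinding_1996} and gives no proof of its own. The details you flag as the main obstacles (uniformity of the returned marked element, the zero-marked-element case forcing the iteration cap) are handled correctly in your sketch.
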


We write $\qargmax_{\delta}\{u[i]: i\in [n]\}$ for an estimate of the maximum of $u$, with probability $>1-\delta$, using $\qargmax$.

\subsection{Analysis of \normalfont{\SolveMdpI}}

We will use the following lemma which clearly follows from the if-then-else statement appearing in \SolveMdpI.
\begin{lem}\label{lem:if_then_else_guarantees}
For all $k\in[K]$ and $l\in \{0\}\cup [L]$, the $v_{k,l}$s are monotone increasing with respect to $(k-1)L+l$.  Moreover, for all $k\in[K]$ and $l\in [L]$, we have $v_{k,l} \geq v(q_{k,l-1})$.
\end{lem}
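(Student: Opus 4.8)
**The plan is to prove both claims by induction on the combined iteration index $(k-1)L + l$, reading the update rules directly off the if-then-else statement in lines 10–11 of \SolveMdpI.**

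First I would unpack the claim about monotonicity. The quantity $(k-1)L+l$ is just a way of linearly ordering the pairs $(k,l)$ as the algorithm proceeds: within an epoch $k$, it increases with $l$, and across epochs, the reset line $v_{k+1,0}\gets v_{k,L}$ ensures continuity at the boundary. So it suffices to check that consecutive iterates satisfy $v_{k,l}\geq v_{k,l-1}$ for $l\in[L]$, together with $v_{k+1,0}=v_{k,L}$ at epoch transitions (the latter is immediate from line 16). For the within-epoch step, I would look at lines 10–11: for each $s\in\Ss$, the algorithm sets $v_{k,l}[s]$ to be either $v(q_{k,l-1})[s]$ (the ``if'' branch, taken precisely when $v(q_{k,l-1})[s]\geq v_{k,l-1}[s]$) or $v_{k,l-1}[s]$ (the ``else'' branch). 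In the first branch we have $v_{k,l}[s]=v(q_{k,l-1})[s]\geq v_{k,l-1}[s]$ by the branch condition, and in the second branch $v_{k,l}[s]=v_{k,l-1}[s]$. Either way $v_{k,l}[s]\geq v_{k,l-1}[s]$, and since this holds for every $s$, we get $v_{k,l}\geq v_{k,l-1}$ element-wise.

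For the second claim, $v_{k,l}\geq v(q_{k,l-1})$ for $l\in[L]$, I would again split into the two branches entrywise. In the ``if'' branch, $v_{k,l}[s]=v(q_{k,l-1})[s]$, so equality holds. In the ``else'' branch, the branch condition is \emph{false}, i.e.\ $v(q_{k,l-1})[s]<v_{k,l-1}[s]$; but in this branch $v_{k,l}[s]=v_{k,l-1}[s]$, so $v_{k,l}[s]=v_{k,l-1}[s]>v(q_{k,l-1})[s]$. In both cases $v_{k,l}[s]\geq v(q_{k,l-1})[s]$, which gives the element-wise inequality $v_{k,l}\geq v(q_{k,l-1})$. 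Essentially, the if-then-else statement computes, coordinatewise, the running maximum $v_{k,l}[s]=\max\{v_{k,l-1}[s],\,v(q_{k,l-1})[s]\}$, and both claims are immediate consequences of $\max\{x,y\}\geq x$ and $\max\{x,y\}\geq y$.

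**I do not expect a genuine obstacle here — the statement is, as the paper itself says, something that ``clearly follows'' from the update rule.** The only thing requiring minor care is the bookkeeping at the epoch boundaries: one must confirm that the ordering by $(k-1)L+l$ is consistent with the reset $v_{k+1,0}\gets v_{k,L}$, so that monotonicity is not accidentally broken when passing from index $(k-1)L+L=kL$ to $kL+1=(k+1-1)L+1$. Since the reset makes $v_{k+1,0}=v_{k,L}$ an equality rather than a strict jump, the chain of inequalities extends seamlessly across epochs, and no additional argument is needed.
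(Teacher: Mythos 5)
Your proof is correct and matches the paper's reasoning exactly: the paper offers no explicit proof, asserting the lemma ``clearly follows from the if-then-else statement,'' and your argument is precisely the spelled-out version of that observation, namely that lines 10--11 compute $v_{k,l}[s]=\max\{v_{k,l-1}[s],\,v(q_{k,l-1})[s]\}$ entrywise, which gives both inequalities, with the epoch-boundary assignment $v_{k+1,0}\gets v_{k,L}$ handling consistency of the ordering by $(k-1)L+l$.
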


Using \lemref{if_then_else_guarantees} and the fact that our mean estimates are always shifted down to have one-sided error, we can prove the following proposition similarly to \cite[Section E of arXiv version]{Sidford_NearOptimal_2018}; the key point is to show that $v_{k,l}\leq\T{v_{k,l}}(v_{k,l})$. We present the full details for completeness.
\begin{prop}\label{prop:upper_bound}
For all $k\in [K]$ and $l\in [L]$, we have
\begin{alignat}{3}
    &v_{k,l} &&\leq v^{\pi_{k,l}} &&\leq v^{*} \label{eq:upper_v},\\
    &q_{k,l} &&\leq q^{\pi_{k,l}} &&\leq q^{*} \label{eq:upper_q},
\end{alignat}\label{prop:upper_vq}
with probability at least $1-\delta$.
\end{prop}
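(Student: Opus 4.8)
The plan is to prove the two chains of inequalities together by induction on the global iteration counter $j \coloneqq (k-1)L + l$, since the updates to $v_{k,l}$, $\pi_{k,l}$, and $q_{k,l}$ are intertwined across epochs. The rightmost inequalities, $v^{\pi_{k,l}} \leq v^*$ and $q^{\pi_{k,l}} \leq q^*$, are free: they hold for \emph{every} policy by the defining optimality property of $\pi^*$ stated in the introduction. So the real content is the two leftmost inequalities $v_{k,l} \leq v^{\pi_{k,l}}$ and $q_{k,l} \leq q^{\pi_{k,l}}$, and following the hint I would route these through the single key claim $v_{k,l} \leq \T{v_{k,l}}(v_{k,l})$, i.e.\ that $v_{k,l}$ is a subsolution of its own policy's value operator. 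Once that claim is in hand, monotonicity and the $\gamma$-contraction of $\T{v_{k,l}}$ (recalled just before \cref{thm:azar}) force $v_{k,l} \leq v^{\T{v_{k,l}}} = v^{\pi_{k,l}}$, because iterating a monotone operator upward from a subsolution converges to its fixed point from below.

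First I would set up the probabilistic conditioning. Every invocation of \qEst\ in \SolveMdpI\ succeeds with probability at least $1-f$, and there are at most $4KLSA$ such invocations over the whole run; by the union bound and the choice $f = \delta/4KLSA$ on Line 3, all estimates are simultaneously within their claimed additive error with probability at least $1-\delta$. I condition on this good event for the remainder of the argument, so all subsequent inequalities are deterministic. The crucial structural observation is that each mean estimate is \emph{shifted down} by exactly its own error tolerance: for instance $\Delta_{k,l}[s,a] = \EstI{f}{(P(v_{k,l}-v_{k,0}))[s,a]}{c(1-\gamma)\epsilon_k} - c(1-\gamma)\epsilon_k$, and similarly for $x_k$. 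On the good event this guarantees \emph{one-sided} error: the shifted estimate never exceeds the true mean it is estimating. Hence $x_k[s,a] \leq (Pv_{k,0})[s,a]$ and $\Delta_{k,l}[s,a] \leq (P(v_{k,l}-v_{k,0}))[s,a]$, so that $x_k + \Delta_{k,l} \leq P v_{k,l}$ entrywise. Combined with the update $q_{k,l} = \max\{r + \gamma(x_k + \Delta_{k,l}), \zero\}$, this yields $q_{k,l} \leq \max\{r + \gamma P v_{k,l}, \zero\}$, which is the componentwise Bellman-type upper bound I will feed into the value inequality.

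For the inductive step I would carry the joint hypothesis ``$v_{k,l-1} \leq v^{\pi_{k,l-1}}$ and $q_{k,l-1} \leq q^{\pi_{k,l-1}}$'' (with the usual bookkeeping $v_{k,0} = v_{k-1,L}$ across epoch boundaries, and base case $v_{1,0} = \zero \leq v^{\pi_{1,0}}$). Using \lemref{if_then_else_guarantees}, which gives both monotonicity of the $v_{k,l}$ and $v_{k,l} \geq v(q_{k,l-1})$, together with the if-then-else branch logic that sets $\pi_{k,l}$ to whichever of $\pi_{k,l-1}$ or $\pi(q_{k,l-1})$ attains the larger value, I would verify $v_{k,l} \leq \T{v_{k,l}}(v_{k,l})$ by checking the two branches separately: in the ``keep'' branch $v_{k,l} = v_{k,l-1}$ and $\pi_{k,l} = \pi_{k,l-1}$ so the claim reduces to the previous iterate's subsolution property plus monotonicity of $\T{}$; in the ``update'' branch $v_{k,l}[s] = v(q_{k,l-1})[s] = q_{k,l-1}[s,\pi(q_{k,l-1})[s]]$, and one applies the $q$-upper bound from the previous paragraph at the chosen action. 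The deterministic value inequality $v_{k,l} \leq \T{v_{k,l}}(v_{k,l})$ then upgrades to $v_{k,l} \leq v^{\pi_{k,l}}$, and pushing this through the $q$-update once more closes the induction for $q_{k,l} \leq q^{\pi_{k,l}}$.

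\medskip

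\noindent\emph{Main obstacle.} The delicate step is establishing $v_{k,l} \leq \T{v_{k,l}}(v_{k,l})$ in the ``update'' branch, where the action is redefined at some states: I must make sure that $v_{k,l}[s] = q_{k,l-1}[s,a]$ (for the newly selected $a$) is bounded above by $r[s,a] + \gamma\, p_{s,a}\trans v_{k,l}$, which requires combining the one-sided $q$-bound at iteration $l-1$ with the monotonicity $v_{k,l} \geq v_{k,l-1}$ so that replacing $v_{k,l-1}$ by $v_{k,l}$ inside the mean can only increase the right-hand side. Handling the variance-reduced decomposition cleanly — tracking that $x_k$ (fixed over the epoch) plus $\Delta_{k,l}$ really does under-estimate $Pv_{k,l}$ rather than $Pv_{k,0}$ — is where the accounting is most error-prone, and is the step I would write out in full detail.
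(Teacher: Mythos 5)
Your proposal is correct and follows essentially the same route as the paper's proof: condition on all \qEst\ estimates succeeding via the union bound with $f=\delta/4KLSA$, exploit the one-sided (shifted-down) errors to get $x_k \leq Pv_{k,0}$ and $\Delta_{k,l} \leq P(v_{k,l}-v_{k,0})$, prove $v_{k,l} \leq \T{\pi_{k,l}}(v_{k,l})$ by induction on $(k-1)L+l$ with the same two-branch case analysis of the if-then-else (using \lemref{if_then_else_guarantees} and the monotonicity $v_{k,l-1}\leq v_{k,l}$ exactly as the paper does), and then upgrade to $v_{k,l} \leq v^{\pi_{k,l}}$ by iterating the monotone contraction $\T{\pi_{k,l}}$. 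One small correction: the inductive invariant you carry should be the subsolution property $v_{k,l-1} \leq \T{\pi_{k,l-1}}(v_{k,l-1})$ itself (which is what you in fact use in the ``keep'' branch), not the stated $v_{k,l-1} \leq v^{\pi_{k,l-1}}$, since the latter does not imply the former.
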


\begin{proof}
We first consider the failure probability. As all estimations are carried out with maximum failure probability $f\coloneqq \delta/4KLSA$ and there are $3KSA + KLSA < 4KLSA$ estimations (Lines 7, 8 and 12), the probability that there exists an incorrect estimate (up to the specified error) is at most $\delta$ by the union bound. 

We henceforth assume the $\qEst$ steps are all correct and proceed to prove \cref{eq:upper_v} and \cref{eq:upper_q}.

The second inequalities in \cref{eq:upper_v} and \cref{eq:upper_q} are clear from the definitions of $v^*$ and $q^*$. We therefore only show the first inequalities below and refer to them when referring to \cref{eq:upper_v} and \cref{eq:upper_q}. The main idea is to use \lemref{if_then_else_guarantees} together with the inequalities
\begin{align}
    x_k &\leq Pv_{k,0},\label{eq:base_monotone_anchor}
    \\
    \Delta_{k,l} &\leq Pv_{k,l} - Pv_{k,0},\label{eq:base_monotone_difference}
\end{align}
that are immediate from the definitions of $x_k$ and $\Delta_{k,l}$ on Lines 8 and 12 respectively because the subtracted terms equal the estimation errors.

To show \cref{eq:upper_v}, it suffices to show
\begin{equation}\label{eq:monotonicity}
    v_{k,l} \leq \T{\pi_{k,l}}(v_{k,l}).
\end{equation}
\Cref{eq:upper_v} then follows from repeatedly applying $\T{\pi_{k,l}}$ on both sides of \cref{eq:monotonicity}, and using the fact that $\T{\pi_{k,l}}$ is monotone increasing and is a contraction with unique fixed point $v^{\pi_{k,l}}$.

We proceed to show \cref{eq:monotonicity} by induction on $n\coloneqq(k-1)L+l$. The base case $n=0$ is true because $v_{1,0}\coloneqq \zero \leq \T{\pi_{1,0}}(v_{1,0}) = r$. The case $n=1$ is also true because $v_{1,1} = v(q_{1,0}) = \zero \leq \T{\pi_{1,1}}(v_{1,1}) = r$, where we used $q_{1,0}\coloneqq \zero$. In addition, note that $v_{k,L}\leq \T{\pi_{k,L}}(v_{k,L})$ is the same as $v_{k+1,0}\leq \T{\pi_{k+1,0}}(v_{k+1,0})$ by definitions on Line 15. This means that once we have established the truth of \cref{eq:monotonicity} at $k=k', l=L$, we can assume its truth at $k=k'+1, l=0$.

Now consider $n>1$. We prove \cref{eq:monotonicity} element-wise for each $s\in \Ss$ by considering the following two cases that could happen at the if-clause on Line 10. 
\begin{enumerate}
\item Case $v(q_{k,l-1})[s] \geq v_{k,l-1}[s]$. Then
\begin{equation}
    \begin{aligned}
        v_{k,l}[s] &\coloneqq v(q_{k,l-1})[s]
        \\
        &= q_{k,l-1}[s,\pi_{k,l}[s]]
        \\
        &= \max\{r[s,\pi_{k,l}(s)] + \gamma (x_{k}[s,\pi_{k,l}[s]]+ \Delta_{k,l-1}[s,\pi_{k,l}[s]]), 0\}
        \\
        &\leq r[s,\pi_{k,l}(s)] + \gamma (Pv_{k,l-1})[s,\pi_{k,l}(s)]
        \\
        &=\T{\pi_{k,l}}(v_{k,l-1})[s]
        \\
        &\leq \T{\pi_{k,l}}(v_{k,l})[s],
\end{aligned}
\end{equation}
where the second line uses $\pi_{k,l}[s] \coloneqq \pi(q_{k,l-1})[s]$ in this case, the third line uses definition of $q_{k,l-1}$ (for $n>1$), the fourth line uses \cref{eq:base_monotone_anchor} and \cref{eq:base_monotone_difference} and $0 \leq v_{k,l-1}$ (\lemref{if_then_else_guarantees}) to remove the $\max$, and the last line uses $v_{k,l-1}\leq v_{k,l}$ (\lemref{if_then_else_guarantees}).

\item Case $v(q_{k,l-1})[s] < v_{k,l-1}[s]$. Then 
\begin{equation}
    v_{k,l}[s] \coloneqq v_{k,l-1}[s]\leq \T{\pi_{k,l-1}}(v_{k,l-1})[s] \leq \T{\pi_{k,l-1}}(v_{k,l})[s] = \T{\pi_{k,l}}(v_{k,l})[s],
\end{equation}
where the first inequality is by the inductive hypothesis, the second inequality uses $v_{k,l-1}\leq v_{k,l}$ (\lemref{if_then_else_guarantees}), and the last equality uses $\pi_{k,l}[s] \coloneqq \pi_{k,l-1}[s]$ in this case.
\end{enumerate}
Therefore, we have established \cref{eq:monotonicity}, and so \cref{eq:upper_v}. 

\Cref{eq:upper_q} then follows from 
\begin{equation}
    q_{k,l} \leq r + \gamma Pv_{k,l} \leq r + \gamma Pv^{\pi_{k,l}} =q^{\pi_{k,l}}, 
\end{equation}
where the first inequality again uses \cref{eq:base_monotone_anchor} and \cref{eq:base_monotone_difference} and $\zero \leq v_{k,l}$ (\lemref{if_then_else_guarantees}), and the second inequality uses \cref{eq:upper_v} which we have just established.
\end{proof}

The above proposition shows that $v_{k,L}$ and $q_{k,L}$ are upper bounded by $v^*$ and $q^*$ respectively. Therefore, the following proposition shows that $v_{k,L}$ and $q_{k,L}$ are converging to $v^*$ and $q^*$ respectively.
\begin{prop}\label{prop:lower_bound}
For all $k\in [K]$, we have
\begin{align}
v^*-\epsilon_k &\leq v_{k,L},\label{eq:lower_v}\\
q^* - \epsilon_k &\leq q_{k,L}\label{eq:lower_q},
\end{align}
with probability at least $1-\delta$.
\end{prop}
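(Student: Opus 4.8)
The plan is to follow the skeleton of \propref{upper_bound}. The failure probability is handled by the identical union bound over the fewer than $4KLSA$ estimation calls, so I would argue deterministically under the same good event that every $\qEst$ call is correct to its stated error. Since \propref{upper_bound} already gives $v_{k,l}\le v^*$ and $q_{k,l}\le q^*$, proving \cref{eq:lower_v} and \cref{eq:lower_q} reduces to upper-bounding the nonnegative error vectors $h_{k,l}\coloneqq v^*-v_{k,l}$. The first step is to record the lower-side analogues of \cref{eq:base_monotone_anchor} and \cref{eq:base_monotone_difference}: because the terms subtracted on Lines 8 and 12 are exactly the estimation errors, correctness of the estimates also yields $x_k\ge Pv_{k,0}-2c(1-\gamma)^{1.5}\epsilon\sqrt{y_k+b}$ and $\Delta_{k,l}\ge P(v_{k,l}-v_{k,0})-2c(1-\gamma)\epsilon_k$, and hence, by the variance-reduction identity $Pv_{k,0}+P(v_{k,l}-v_{k,0})=Pv_{k,l}$,
\begin{equation}
x_k+\Delta_{k,l}\ \ge\ Pv_{k,l}-\xi_k,\qquad \xi_k\coloneqq 2c(1-\gamma)^{1.5}\epsilon\sqrt{y_k+b}+2c(1-\gamma)\epsilon_k.
\end{equation}

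I would then induct on the epoch index $k$, setting $\epsilon_0\coloneqq\Gamma$ so that the base case is $h_{1,0}=v^*\le\Gamma=\epsilon_0$ and the inductive hypothesis reads $h_{k,0}\le\epsilon_{k-1}$ (for $k\ge2$ this is the previous epoch's conclusion, since $v_{k,0}=v_{k-1,L}$). The core of each epoch is the one-step inequality
\begin{equation}\label{eq:plan_step}
h_{k,l+1}\ \le\ \gamma\,P^{\pi^*}_S\,h_{k,l}+\gamma\,\xi_k^*,
\end{equation}
where $\pi^*$ is an optimal policy, $P^{\pi^*}_S\in\mathbb{R}^{S\times S}$ is its state-to-state transition matrix $(P^{\pi^*}_S)_{s,s'}=p(s'\,|\,s,\pi^*(s))$, and $\xi_k^*[s]\coloneqq\xi_k[s,\pi^*(s)]$. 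To derive \cref{eq:plan_step} I would drop the outer $\max$ in $q_{k,l}$, substitute the lower bound above, compare against $q^*[s,\pi^*(s)]=v^*[s]=r[s,\pi^*(s)]+\gamma(Pv^*)[s,\pi^*(s)]$, and then use $v(q_{k,l})[s]\ge q_{k,l}[s,\pi^*(s)]$ together with the monotonicity $v_{k,l+1}\ge v(q_{k,l})$ from \lemref{if_then_else_guarantees}. Evaluating at the fixed action $\pi^*(s)$ is what makes the recursion linear in a \emph{single fixed} substochastic matrix, so that the per-iteration errors telescope cleanly.

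Unrolling \cref{eq:plan_step} across the $L$ iterations gives
\begin{equation}
h_{k,L}\ \le\ (\gamma P^{\pi^*}_S)^L h_{k,0}+\gamma(I-\gamma P^{\pi^*}_S)^{-1}\xi_k^*,
\end{equation}
and the remaining task is to bound the right-hand side by $\epsilon_k=\epsilon_{k-1}/2$, with room to spare. The transient term is negligible because $L=\Gamma\ceil{\ln(4\Gamma/\epsilon)}+1$ forces $\gamma^L\le\epsilon/(4\Gamma)$, so $(\gamma P^{\pi^*}_S)^L h_{k,0}\le\gamma^L\epsilon_{k-1}=O(\epsilon_k/\sqrt{\Gamma})$ once the input restriction $\epsilon\le\sqrt{\Gamma}$ is used. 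The $\Delta$-part of $\xi_k^*$ contributes $\gamma(I-\gamma P^{\pi^*}_S)^{-1}\,2c(1-\gamma)\epsilon_k\one=2c\gamma\epsilon_k\one$ via $(I-\gamma P^{\pi^*}_S)^{-1}\one=\Gamma\one$, which is $O(c\epsilon_k)$. I expect the \emph{variance} part, $\gamma(I-\gamma P^{\pi^*}_S)^{-1}\,2c(1-\gamma)^{1.5}\epsilon\sqrt{y_k^*+b}$, to be the main obstacle: a crude pointwise bound $\sqrt{y_k^*+b}\le\Gamma$ only yields $O(c\epsilon\sqrt{\Gamma})$, which is too large. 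I would resolve this in two moves. First, the variance-estimation guarantees on Line 8 give $y_k\le\sigma^2(v_{k,0})+O(b)$, and since the standard deviation is $1$-Lipschitz in the sup-norm, $\abs{\sigma(v_{k,0})[s,a]-\sigma(v^*)[s,a]}\le\norm{v_{k,0}-v^*}\le\epsilon_{k-1}$; together these give $\sqrt{y_k^*+b}\le\sigma(v^*)[s,\pi^*(s)]+\epsilon_{k-1}+O(\sqrt{b})$. Second, I would invoke the total-variance bound \Cref{thm:azar}, which, restricted to the $\pi^*$-consistent coordinates on which $(I-\gamma P^{\pi^*})^{-1}$ and $(I-\gamma P^{\pi^*}_S)^{-1}$ agree, caps the sup-norm of $(I-\gamma P^{\pi^*}_S)^{-1}$ applied to the vector with entries $\sigma(v^*)[s,\pi^*(s)]$ at order $\Gamma^{1.5}$. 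The cancellation $(1-\gamma)^{1.5}\Gamma^{1.5}=1$ then collapses the leading variance contribution to $O(c\epsilon)$, while the $\epsilon_{k-1}$- and $\sqrt{b}$-residuals become $O(c\epsilon_k)$ precisely because $\epsilon\le\sqrt{\Gamma}$ (and $\epsilon_k\ge\epsilon/2$). Choosing $b$ and $c$ to be small enough constants, as on Line 3, makes the sum at most $\epsilon_k$, in fact strictly below $\epsilon_k$ by a constant fraction, which closes the induction for \cref{eq:lower_v}.

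Finally, \cref{eq:lower_q} follows from a single Bellman step: the definitions give $q^*-q_{k,L}\le\gamma P\,h_{k,L}+\gamma\xi_k$. Here $\gamma P\,h_{k,L}\le\gamma\norm{h_{k,L}}$, while the \emph{un-accumulated} error $\gamma\xi_k$ is only $O(c\epsilon/\sqrt{\Gamma})=O(c\epsilon_k)$ pointwise (again using $\sqrt{y_k+b}\le\Gamma$ and $\epsilon\le\sqrt{\Gamma}$). Since the proof of \cref{eq:lower_v} bounds $\norm{h_{k,L}}$ strictly below $\epsilon_k$ by a constant fraction that can be enlarged by shrinking $c$, these two contributions sum to at most $\epsilon_k$, which gives \cref{eq:lower_q} and completes the proof.
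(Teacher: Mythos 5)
Your proposal is correct and uses all of the same key ingredients as the paper's proof---the union bound over fewer than $4KLSA$ estimations, the one-sided estimate inequalities for $x_k$ and $\Delta_{k,l}$, induction on epochs with $\epsilon_0\coloneqq\Gamma$, the bound $\sqrt{y_k+b}\leq\sigma(v^*)+2\epsilon_k+\sqrt{3b}$ via the Lipschitz property of the standard deviation, the crucial application of \cref{thm:azar} at $\pi^*$, and the same constant bookkeeping with $b$, $c$, and $\epsilon\leq\sqrt{\Gamma}$---but organizes them differently. The paper runs the contraction recursion at the level of Q-functions: it shows $q^*-q_{k,l}\leq\gamma P^{\pi^*}(q^*-q_{k,l-1})+\xi_k$ (using the $SA\times SA$ matrix $P^{\pi^*}$), unrolls to get \cref{eq:lower_q} at $l=L$ and $l=L-1$, and only then deduces \cref{eq:lower_v} from $v_{k,L}\geq v(q_{k,L-1})$. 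You instead recurse on the value error $h_{k,l}=v^*-v_{k,l}$ with the $S\times S$ matrix $P^{\pi^*}_S$, prove \cref{eq:lower_v} first, and recover \cref{eq:lower_q} by one extra Bellman step, which forces you to track that $\norm{h_{k,L}}$ is below $\epsilon_k$ by a constant margin (it is, with the algorithm's $b=1$, $c=0.01$). Your restriction argument identifying $(I-\gamma P^{\pi^*}_S)^{-1}$ applied to $\sigma(v^*)[\cdot,\pi^*(\cdot)]$ with the $\pi^*$-consistent coordinates of $(I-\gamma P^{\pi^*})^{-1}\sigma(v^*)$ is valid, so \cref{thm:azar} transfers as you claim. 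The two routes are equivalent in substance; the paper's Q-level recursion avoids having to re-derive the q-bound with slack, while yours keeps everything in dimension $S$.

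One small slip to fix: you cannot unroll your one-step inequality ``across the $L$ iterations'' down to $h_{k,0}$, because the step at $l=0$ would need $q_{k,0}$ to have the Line-14 form $\max\{r+\gamma(x_k+\Delta_{k,0}),\zero\}$, whereas $q_{k,0}$ is inherited from the previous epoch (or is $\zero$ when $k=1$) and $\Delta_{k,0}$ does not exist. The paper faces the same boundary issue and resolves it by recursing only for $l\geq1$ and bounding $q^*-q_{k,0}\leq\Gamma$ crudely. In your setup the fix is equally easy: unroll only $L-1$ times down to $h_{k,1}$ and use the monotonicity $v_{k,1}\geq v_{k,0}$ from \lemref{if_then_else_guarantees} to get $h_{k,1}\leq h_{k,0}\leq\epsilon_{k-1}$, so the transient term becomes $\gamma^{L-1}\epsilon_{k-1}\leq\epsilon_k/2$, which is exactly what your argument needs.
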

If there were no mean estimation errors, \propref{lower_bound} follows from the contractive properties of the Bellman operator. The challenge for us is to analyze those errors carefully. As we mentioned in our Introduction, the errors involved here go beyond those analyzed in \cite{Sidford_NearOptimal_2018}.  
\begin{proof}
By reusing the first paragraph in the proof of \propref{upper_bound}, we can readily set aside consideration of the failure probability. We henceforth assume  the $\qEst$ steps are all correct and proceed to prove \cref{eq:lower_v} and \cref{eq:lower_q}.

We proceed by induction on $k\geq 0$  with the inductive hypothesis comprising both inequalities above for all indices strictly less than $k$. The base case $k=0$ can be established by defining $\epsilon_0\coloneqq \Gamma$, $v_{0,L}\coloneqq \bf{0}$, and $q_{0,L}\coloneqq \bf{0}$. Note that these definitions are consistent with the induction steps below.

Now consider $k>0$. The main idea is to use \cref{thm:azar} and the inequalities
\begin{align}
    x_k &\geq Pv_{k,0} -2c(1-\gamma)^{1.5}\epsilon \sqrt{y_k+b},\\
    \Delta_{k,l} &\geq Pv_{k,l} - Pv_{k,0} -2c(1-\gamma)\epsilon_k,
\end{align}
that are immediate from the definitions of $x_k$ and $\Delta_{k,l}$ on Lines 8 and 12 respectively.

We first show \cref{eq:lower_q}. Define vector $\xi_{k}\in\mathbb{R}^{SA}$ by
\begin{equation}\label{eq:xi_definition}
    \xi_{k}\coloneqq 2c(1-\gamma)^{1.5}\epsilon\sqrt{y_k+b}+2c(1-\gamma)\epsilon_k,
\end{equation}
then we have
\begin{equation}\label{eq:upper_bound_recursion}
\begin{aligned}
    q^* - q_{k,l} 
    &= r + \gamma P^{\pi^*}q^* - \max\{r + \gamma (x_k + \Delta_{k,l}),\zero\}
    \\
    &\leq \gamma P^{\pi^*}q^* - \gamma (x_k+ \Delta_{k,l})
    \\
    &\leq \gamma P^{\pi^*}q^* - \gamma (\cancel{Pv_{k,0}} + Pv_{k,l} - \cancel{Pv_{k,0}} - 2c(1-\gamma)^{1.5}\epsilon\sqrt{y_k+b}-2c(1-\gamma)\epsilon_k)
    \\
    &\leq \gamma P^{\pi^*}q^* - \gamma Pv_{k, l} + 2c(1-\gamma)^{1.5}\epsilon\sqrt{y_k+b}+2c(1-\gamma)\epsilon_k
    \\
    &= \gamma P^{\pi^*}q^* - \gamma Pv_{k, l} + \xi_{k}
    \\
    &\leq \gamma P^{\pi^*}q^* - \gamma Pv(q_{k, l-1}) + \xi_{k}
    \\
    &\leq \gamma P^{\pi^*}(q^* - q_{k, l-1}) + \xi_k,
\end{aligned}
\end{equation}
where the fourth line uses $\gamma\leq 1$, the sixth line uses $v(q_{k,l-1})\leq v_{k,l}$ (\lemref{if_then_else_guarantees}), and the last line uses $P^{\pi^*}q_{k,l-1}\leq Pv(q_{k,l-1})$ which follows from definitions.

Recursing \cref{eq:upper_bound_recursion} with respect to $l\geq 1$ gives
\begin{equation}\label{eq:overall_bound}
\begin{aligned}
    q^* - q_{k,l} &\leq \gamma^{l} (P^{\pi^*})^{l}(q^* - q_{k, 0}) + \sum_{i=0}^{l-1}\gamma^{i}(P^{\pi^*})^i \xi_k 
    \\[2pt]
    &\leq \gamma^l \Gamma + (I-\gamma P^{\pi^*})^{-1}\xi_k,
\end{aligned}
\end{equation}
where the last line uses $q^*-q_{k,0} \leq q^* \leq \Gamma$ as $q_{k,0}\geq \zero$ by definitions on Line 4 and Line 13. The first term, $\gamma^{l}\Gamma$, can be bounded when $l=L-1,L$:
\begin{equation}\label{eq:gamma_h_bound}
    \gamma^{L}\Gamma \leq \gamma^{L-1}\Gamma \leq \exp(-(L-1)(1-\gamma)) \Gamma \leq \epsilon/4 \leq \epsilon_k/2,
\end{equation}
where the second inequality uses $x\leq \exp(-(1-x))$ for all $x\in \mathbb{R}$, the third inequality uses the definition $L\coloneqq \Gamma \ceil{\log (4\Gamma/\epsilon)}+1$, and the last inequality uses $\epsilon \leq 2\epsilon_K\leq 2\epsilon_k$ for all $k\in[K]$ which follows from $K \leq \log_2(\Gamma/\epsilon)+1$.

We now bound the second term, $(I-\gamma P^{\pi^*})^{-1}\xi_k$. To this end, we first bound the term $\sqrt{y_k+b}$ appearing in $\xi_k$. From the definition of $y_k$, there exists a $b'$ with $\abs{b'}\leq b$ such that
\begin{equation}
\begin{aligned}
    \sqrt{y_k+b} &\leq \max \{(P v_{k,0}^2 + b - (Pv_{k,0} + (1-\gamma)b')^2)^{1/2},\sqrt{b}\}
    \\
    &\leq (\sigma^2(v_{k,0}) + b + 2(1-\gamma)|b'|Pv_{k,0})^{1/2}
    \\
    &\leq \sqrt{\sigma^2(v_{k,0})+ 3b}
    \\
    &\leq \sigma(v_{k,0}) + \sqrt{3b}
    \\
    &\leq \sigma(v^*) + \sigma(v^*-v_{k,0}) + \sqrt{3b},
\end{aligned}
\end{equation}
where the second line uses $\zero\leq v_{k,0}$ (\lemref{if_then_else_guarantees}) to remove the $\max$, the third line uses $v_{k,0}\leq \Gamma$ (\propref{upper_bound}), and
the last line uses the fact that, for any random variables $X$ and $Y$, we have $\var[X+Y] = \var[X] + \var[Y] + 2 \, \mathrm{Cov}[X, Y] \leq (\sqrt{\var[X]} + \sqrt{\var[Y]})^2$.

But we have $v_{k,0}-v^*\leq 0$ from \cref{eq:upper_v} of \propref{upper_vq} and $v^*-v_{k,0} = v^* - v_{k-1,L} \leq \epsilon_{k-1}$ by the inductive hypothesis. Therefore, $\sigma(v_{k,0}-v^*) \leq \norm{v_{k,0}-v^*} \leq \epsilon_{k-1}= 2\epsilon_k$, and therefore 
\begin{equation}
    \sqrt{y_k+b}\leq  \sigma(v^*) + 2\epsilon_k + \sqrt{3b}.
\end{equation}
Therefore, recalling $\xi_k \coloneqq 2c(1-\gamma)^{1.5}\epsilon\sqrt{y_k+b} + 2c(1-\gamma)\epsilon_k$ from \cref{eq:xi_definition}, we have
\begin{equation}\label{eq:h_xi_bound}
\begin{aligned}
    (I - \gamma P^{\pi^*})^{-1}\xi_{k} &= 2c(1-\gamma)^{1.5}\epsilon (I - \gamma P^{\pi^*})^{-1} \sqrt{y_k + b} + 2c(1-\gamma) \epsilon_k(I - \gamma P^{\pi^*})^{-1}\one
    \\
    & \leq 2c (1-\gamma)^{1.5}\epsilon (I - \gamma P^{\pi^*})^{-1} (\sigma(v^*)+2\epsilon_k+\sqrt{3b}) + 2c (1-\gamma) \epsilon_k(I - \gamma P^{\pi^*})^{-1}\one
    \\
    &\leq 2c(1-\gamma)^{1.5}\epsilon (I - \gamma P^{\pi^*})^{-1} \sigma(v^*) + 2c\sqrt{1-\gamma}\,\epsilon\, 2\epsilon_k + 2c\sqrt{1-\gamma}\,\epsilon\, \sqrt{3b} + 2c\,\epsilon_k
    \\
    &\leq 2c \, \sqrt{2}\,\epsilon + 2c \,\sqrt{1-\gamma}\, \epsilon\, 2\epsilon_k + 2c \,\epsilon \sqrt{3b} + 2c\, \epsilon_k
    \\
    &\leq 2c \,(2\sqrt{2}+ 2 + 2\sqrt{3b} + 1)\epsilon_k
    \\
    &< \epsilon_k/2,
\end{aligned}
\end{equation}
where the third line uses $(I-\gamma P^{\pi^*})^{-1} \one \leq (1-\gamma)^{-1}$, the fourth line \emph{crucially} uses \cref{thm:azar} with $\pi$ set to $\pi^*$, the fifth line uses $\epsilon \leq 2\epsilon_k$ for all $k\in[K]$ and the input assumption $\sqrt{1-\gamma}\,\epsilon\leq1$, i.e., $\epsilon\leq \sqrt{\Gamma}$, and the last line uses definitions $b \coloneqq 1$ and $c\coloneqq 0.01$.

Using \cref{eq:gamma_h_bound} and \cref{eq:h_xi_bound} to bound the first and second terms in \cref{eq:overall_bound} respectively, we find
\begin{alignat}{2}
    &q^* - q_{k,L} &&\leq \epsilon_k,
    \\
    &q^* - q_{k,L-1} &&\leq \epsilon_k.
\end{alignat}
The top equation is one inequality we wish to show in our induction. The bottom equation can be used to establish the other inequality as follows. For all $s\in \Ss$, we have
\begin{equation}
v_{k,L}[s] \geq v(q_{k,L-1})[s] = \max_{a}\{q_{k,L-1}[s,a]\} \geq \max_{a}\{q^*[s,a]-\epsilon_k\} = v^*[s]- \epsilon_k,
\end{equation}
where the first inequality is by \lemref{if_then_else_guarantees}. Hence $v_{k,L}\geq v^* - \epsilon_k$, as desired.
\end{proof}

The correctness of \algoref{SolveMdp1} then follows from combining \propref{upper_bound} and \propref{lower_bound} with $k=K$ and $l=L$ and recalling the definitions of $(\hat{v},\, \hat{\pi},\, \hat{q})$ and $K$. Formally:
\begin{thm}[Correctness of \normalfont{\SolveMdpI}]\label{thm:correctness_1}
The outputs $\hat{v}$, $\hat{\pi}$, and $\hat{q}$ of \SolveMdpI\ satisfy
\begin{alignat}{4}
    &v^* - \epsilon && \leq \hat{v} && \leq v^{\hat{\pi}} &&\leq v^*,
    \\
    &q^* - \epsilon && \leq \hat{q} && \leq q^{\hat{\pi}} && \leq q^*,
\end{alignat}
with probability at least $1-\delta$.
\end{thm}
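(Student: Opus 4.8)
The plan is to derive \Cref{thm:correctness_1} by simply instantiating \propref{upper_bound} and \propref{lower_bound} at the final indices $k=K$ and $l=L$, then translating the resulting inequalities into the claimed statement using the definitions of the outputs and the initialization of $K$. Both propositions are stated to hold with probability at least $1-\delta$; since they are about the \emph{same} run of the algorithm (their proofs share the identical union-bound argument over the $\qEst$ steps), I would first note that both hold simultaneously on the same good event of probability at least $1-\delta$, so no further probability loss occurs and there is no need to apply a union bound a second time.

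On that good event, \propref{upper_bound} at $k=K,l=L$ gives $v_{K,L}\leq v^{\pi_{K,L}}\leq v^*$ and $q_{K,L}\leq q^{\pi_{K,L}}\leq q^*$, while \propref{lower_bound} at $k=K$ gives $v^*-\epsilon_K\leq v_{K,L}$ and $q^*-\epsilon_K\leq q_{K,L}$. Recalling $\hat{v}\coloneqq v_{K,L}$, $\hat{q}\coloneqq q_{K,L}$, and $\hat{\pi}\coloneqq \pi_{K,L}$ (so that $v^{\pi_{K,L}}=v^{\hat\pi}$ and $q^{\pi_{K,L}}=q^{\hat\pi}$), chaining these yields exactly
\begin{equation}
v^*-\epsilon_K\leq \hat v\leq v^{\hat\pi}\leq v^*,\qquad q^*-\epsilon_K\leq \hat q\leq q^{\hat\pi}\leq q^*.
\end{equation}
The only remaining point is to replace $\epsilon_K$ by $\epsilon$. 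Since $\epsilon_k\coloneqq \Gamma/2^k$ (Line 6) and $K\coloneqq \ceil{\log_2(\Gamma/\epsilon)}$ (Line 3), we have $2^K\geq \Gamma/\epsilon$, whence $\epsilon_K=\Gamma/2^K\leq \epsilon$. Thus $v^*-\epsilon\leq v^*-\epsilon_K\leq \hat v$ and likewise for $\hat q$, which upgrades the bounds to the stated $\epsilon$ and completes the proof.

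I do not expect any genuine obstacle here: the theorem is a bookkeeping corollary whose entire content was already carried by the two preceding propositions. The one subtlety worth stating explicitly—so that the probability bound is not accidentally doubled—is the observation that the good events of \propref{upper_bound} and \propref{lower_bound} coincide, since both are precisely the event that all $3KSA+KLSA<4KLSA$ calls to $\qEst$ succeed, each with failure parameter $f\coloneqq \delta/4KLSA$. Indeed \propref{lower_bound}'s proof also uses \cref{eq:upper_v} from \propref{upper_bound} as part of its induction, so the two results are already logically entangled on the same event. Having fixed that event of probability at least $1-\delta$, everything else is the deterministic chaining and the elementary inequality $\epsilon_K\leq\epsilon$ above.
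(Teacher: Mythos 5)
Your proposal is correct and matches the paper's own argument, which likewise obtains \cref{thm:correctness_1} by combining \propref{upper_bound} and \propref{lower_bound} at $k=K$, $l=L$ and using $\epsilon_K = \Gamma/2^{K}\leq\epsilon$ from the definition of $K$. Your explicit observation that the two propositions hold on the \emph{same} good event (all $\qEst$ calls succeeding), so the failure probability is not doubled, is a point the paper leaves implicit but is exactly the right justification.
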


Having shown correctness, we turn to complexity:
\begin{thm}[Complexity of \normalfont{\SolveMdpI}]
\label{thm:complexity_1}
The quantum query complexity of \SolveMdpI\ is
\begin{equation}
\Order{SA(\Gamma^{1.5}\, \epsilon^{-1} + \Gamma^2)\, \log^4(\Gamma/\epsilon) \log(SA\Gamma/\delta)}.
\end{equation}
\end{thm}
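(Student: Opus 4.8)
The plan is to account for every call that \SolveMdpI\ makes to the quantum mean-estimation subroutines \AI\ and \AII, bound each call via \Cref{thm:montanaro_monte_carlo_mean}, and sum over all calls. There are exactly three estimation sites: the computation of $y_k$ (Line 7, two \AI\ calls), of $x_k$ (Line 8, one \AII\ call), and of $\Delta_{k,l}$ (Line 12, one \AI\ call). The $y_k$ and $x_k$ lines lie inside the $k$-loop but outside the $l$-loop, so each is run once per epoch for every $(s,a)$, i.e. $KSA$ times; the $\Delta_{k,l}$ line lies inside both loops, so it is run $KLSA$ times. This reproduces the count $3KSA+KLSA$ from the proof of \propref{upper_bound}. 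Each call is boosted to failure probability $f$ by the powering lemma, contributing a factor $O(\log(1/f))$. Crucially, the query count of either subroutine depends only on the ratio of the error bound it is given ($u$ for \AI, $\sigma$ for \AII) to its target error, and in every case this ratio is a fixed function of the input parameters; hence the per-call costs are deterministic and the total follows by summation. (Correctness of the passed bounds is needed only for \Cref{thm:correctness_1}, not for the query count.)

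I would next fix the cost of each site. For $\Delta_{k,l}$ (Line 12), \AI\ estimates $P(v_{k,l}-v_{k,0})$ to error $c(1-\gamma)\epsilon_k$; the nonnegativity promise holds since $v_{k,l}\ge v_{k,0}$ (\lemref{if_then_else_guarantees}), and one may take the upper bound $u=2\epsilon_k$, justified a priori by $0\le v_{k,l}-v_{k,0}\le v^*-v_{k,0}\le\epsilon_{k-1}=2\epsilon_k$ (combining \propref{upper_bound} and \propref{lower_bound}). Thus $u/\epsilon'=2/(c(1-\gamma))=O(\Gamma)$ and the call costs $O(\Gamma\log(1/f))$. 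For $y_k$ (Line 7), the two \AI\ calls estimate $Pv_{k,0}^2$ and $Pv_{k,0}$ using $v_{k,0}\in[0,\Gamma]$ (\propref{upper_bound}), so $u=\Gamma^2$ and $u=\Gamma$ respectively, to the constant-scale target errors $b$ and $(1-\gamma)b$; with $b=1$ both ratios are $O(\Gamma^2)$, for a cost of $O(\Gamma^2\log(1/f))$.

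The crux is $x_k$ (Line 8), where the total-variance idea pays off. Here \AII\ estimates $Pv_{k,0}$ with standard-deviation bound $\sigma=\sqrt{y_k+b}$ and target error $\epsilon'=c(1-\gamma)^{1.5}\epsilon\sqrt{y_k+b}$. The key observation is that the two copies of the (data-dependent, a priori unknown) factor $\sqrt{y_k+b}$ cancel, leaving $\sigma/\epsilon'=1/(c(1-\gamma)^{1.5}\epsilon)=\Theta(\Gamma^{1.5}/\epsilon)$ independent of $y_k$. I would also verify the admissibility hypothesis $\epsilon'\in(0,4\sigma)$, which reduces to $c\epsilon/\Gamma^{1.5}<4$ and holds because $\epsilon\le\sqrt{\Gamma}\le\Gamma^{1.5}$ and $c=0.01$. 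Hence the call costs $O\big((\Gamma^{1.5}/\epsilon)\log^2(\Gamma^{1.5}/\epsilon)\log(1/f)\big)=O\big((\Gamma^{1.5}/\epsilon)\log^2(\Gamma/\epsilon)\log(1/f)\big)$. I expect this cancellation to be the conceptual heart of the proof: it guarantees the sample cost cannot blow up with an uncontrollable variance estimate, exactly resolving the ``second deficiency'' of \AII\ flagged in the technical overview.

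Finally, I would assemble the three contributions and substitute $K=O(\log(\Gamma/\epsilon))$, $L=O(\Gamma\log(\Gamma/\epsilon))$, and $f=\delta/4KLSA$, so that $\log(1/f)=O(\log(SA\Gamma/\delta)+\log\log(\Gamma/\epsilon))=O(\log(SA\Gamma/\delta)+\log(\Gamma/\epsilon))$. Summing, $y_k$ contributes $SA\Gamma^2\log(\Gamma/\epsilon)\log(1/f)$, $x_k$ contributes $SA(\Gamma^{1.5}/\epsilon)\log^3(\Gamma/\epsilon)\log(1/f)$, and $\Delta_{k,l}$ contributes $SA\Gamma^2\log^2(\Gamma/\epsilon)\log(1/f)$. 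Bounding each $\log(\Gamma/\epsilon)$ exponent by $4$ and absorbing the $\log(\Gamma/\epsilon)$ slack in $\log(1/f)$ into one extra such factor collapses the sum to $O\big(SA(\Gamma^{1.5}\epsilon^{-1}+\Gamma^2)\log^4(\Gamma/\epsilon)\log(SA\Gamma/\delta)\big)$, matching \Cref{thm:complexity_1}. The only real care needed is honest bookkeeping of the logarithmic factors; the powers of $\Gamma$ and $\epsilon$ fall out immediately from the per-call ratios above.
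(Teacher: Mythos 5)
Your proposal is correct and takes essentially the same route as the paper's proof: the same per-site accounting of the three estimation lines via \Cref{thm:montanaro_monte_carlo_mean}, the same identification of the cancellation $\sigma/\epsilon' = \Theta(\Gamma^{1.5}/\epsilon)$ on Line 8 as the crucial step (the paper marks exactly this cancellation as ``importantly''), and the same final assembly with the amplification factor $\log(1/f)$. Your bookkeeping is in one place slightly more careful than the paper's (you keep the factor $K$ in the Line-12 cost, giving $\log^2(\Gamma/\epsilon)$ where the paper writes $\log(\Gamma/\epsilon)$), which is harmless since both fit under the stated $\log^4$ budget.
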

The proof of \Cref{thm:complexity_1} involves showing \Cref{thm:montanaro_monte_carlo_mean} is applicable and applying it.

\begin{proof}
    As in the correctness analysis, we assume that all estimations are correct, up to the specified error, because the probability that this does not hold is at most $\delta$. This means we can assume all results obtained during the correctness analysis. In the following, we will use $K=O(\log(\Gamma/\eps))$ and $L = O(\Gamma \log (\Gamma/\epsilon))$ without further remarks. 

Let $C$ be the complexity of \SolveMdpI\ as if all estimations were carried out with maximum failure probabilities set to constant. Then, since the actual maximum failure probabilities are set to $f \coloneqq \delta/4KLSA$, the actual complexity of \SolveMdpI\ is
\begin{equation}\label{eq:amplification_factor}
    O(C \log(KLSA/\delta)) = O(C \log(SA\Gamma\log(\Gamma/\epsilon)/\delta)).
\end{equation}

Now we bound $C$ by examining each line involving $\qEst$ in turn and using \cref{thm:montanaro_monte_carlo_mean}. 

On Line 7, we can bound $\zero\leq v_{k,0}\leq v^* \leq \Gamma$. Therefore, we can use quantum mean estimation algorithm $\AI$ in \cref{thm:montanaro_monte_carlo_mean}, which results in an overall query cost of order
\begin{equation}\label{eq:var_est_cost}
    SA K (\Gamma^2 \, b^{-1} + \sqrt{\Gamma^2 \, b^{-1}} + \Gamma(1-\gamma)^{-1}b^{-1} + \sqrt{\Gamma(1-\gamma)^{-1}b^{-1}}) = O(SA \Gamma^2\log(\Gamma/\epsilon)).
\end{equation}

On Line 8, we see that $\sigma^2(v_{k,0})[s,a] \leq \sqrt{y_k[s,a] + b}$. We also note that $0< (1-\gamma)^{1.5}\epsilon \sqrt{y_k[s,a] + b} < 4 \sqrt{y_k[s,a] + b}$. Therefore, we can use quantum mean estimation algorithm $\AII$ in \cref{thm:montanaro_monte_carlo_mean}, with error set to $(1-\gamma)^{1.5}\epsilon \sqrt{y_k[s,a] + b}$ and variance upper bound set to $y_k[s,a] + b$, which results in an overall query cost of order 
\begin{equation}\label{eq:anchor_cost}
    K\sum_{(s,a)\in\Ss\times\As} \, 
    w[s,a] \log^2(w[s,a]) = O(SA\Gamma^{1.5}\epsilon^{-1}\log^3(\Gamma/\epsilon)),
\end{equation}
where, importantly, $w[s,a] \coloneqq \big(\cancel{\sqrt{y_k[s,a] + b}}\big)\big({(1-\gamma)^{1.5}\epsilon \, \cancel{\sqrt{y_k[s,a] + b}}}\big)^{-1} = \Gamma^{1.5}/\epsilon$.

On Line 12, we can bound $\zero \leq v_{k,l}-v_{k,0}\leq v^* - v_{k,0} \leq \epsilon_{k-1} = 2\epsilon_k$. Therefore, we can use quantum mean estimation algorithm $\AI$ in \cref{thm:montanaro_monte_carlo_mean}, which results in an overall cost of order
\begin{equation}\label{eq:shift_cost}
    LSA \, \Bigg(\frac{2 \cancel{\epsilon_k}}{c(1-\gamma)\cancel{\epsilon_k}} + \sqrt{\frac{2 \cancel{\epsilon_k}}{c(1-\gamma)\cancel{\epsilon_k}}}\Bigg) = O(SA \Gamma^2 \log(\Gamma/\epsilon)).
\end{equation}

Adding together \cref{eq:var_est_cost}, \cref{eq:anchor_cost}, and \cref{eq:shift_cost}, and noting that all logarithmic terms are at most $\log^3(\Gamma/\epsilon)$, shows that 
\begin{equation}
    C = O(SA(\Gamma^{1.5}\epsilon^{-1} + \Gamma^2)\log^3(\Gamma/\epsilon)).
\end{equation}
Combining the above equation with \cref{eq:amplification_factor} shows that the overall quantum query complexity of \SolveMdpI\ is
\begin{equation}
O(SA(\Gamma^{1.5}\epsilon^{-1} + \Gamma^2)\log^3(\Gamma/\epsilon)\log(SA\Gamma\log(\Gamma/\eps)/\delta)) = O(SA(\Gamma^{1.5}\epsilon^{-1} + \Gamma^2)\log^4(\Gamma/\epsilon)\log(SA\Gamma/\delta)),
\end{equation}
as desired.
\end{proof}

\subsection{Analysis of \normalfont{\SolveMdpII}}

If we only require $v^*$ and $\pi^*$ but not $q^*$, then we present an alternative quantum algorithm we call \SolveMdpII\ that is quadratically faster than \SolveMdpI\ in terms of $A$. The source of the speedup in $A$ is our use of quantum maximum finding to find the maximum at each iteration $l$, $\qargmax_f\{q_{l-1,s}[a]\ : \ a\in \As\}$ on Line 6 which uses the quantum oracle encoding $U_{q_{l-1},s}$ created in the previous iteration $l-1$.

Failure probability aside, our strategy for proving the correctness of \SolveMdpII\ (\cref{thm:correctness_2}) is to observe the similarity between \SolveMdpII\ and \SolveMdpI\ and then reuse the arguments used to prove the correctness of \SolveMdpI.

\SolveMdpII\ is similar to \SolveMdpI\ with $k$ set to $1$. In particular, the vectors $z_{l}, \, q_{l}\in \mathbb{R}^{SA}$, defined entry-wise by
\begin{align}
    z_{l}[s,a] &\coloneqq z_{l,s}[a],\\
    q_{l}[s,a] &\coloneqq q_{l,s}[a],
\end{align}
are analogous to the vectors $x_{1} + \Delta_{1,l}$ and $q_{1,l}$ appearing in \SolveMdpI\ respectively. Moreover, the $\tilde{v}_l[s] \coloneqq \max_{a}\{q_{l-1,s}[a]\} = \max_{a}\{q_{l-1}[s,a]\}$ appearing in \SolveMdpII\ corresponds exactly to the $v(q_{1,l-1})[s] \coloneqq \max_{a}\{q_{1,l-1}[s,a]\}$ appearing in \SolveMdpI.  

Having observed the similarity between \SolveMdpII\ and \SolveMdpI, the following analogue of \lemref{if_then_else_guarantees} due to the if-then-else statement is clear.
\begin{lem}\label{lem:if_then_else_guarantees_2}
For all $l\in [L]$, the $v_l$s are monotone increasing, that is $v_{l-1} \leq v_{l}$, and moreover we have $v_{l} \geq v(q_{l-1})$.
\end{lem}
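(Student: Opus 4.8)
The plan is to mirror the (omitted) proof of \lemref{if_then_else_guarantees}: both claims follow immediately from a case analysis on the if-then-else statement on Lines 8--9 of \SolveMdpII, once we identify $\tilde{v}_l$ with $v(q_{l-1})$. First I would record this identification. On Line 6 the routine sets $a^*[s]=\qargmax_f\{q_{l-1,s}[a]:a\in\As\}$, and on Line 7 it sets $\tilde{v}_l[s]=q_{l-1,s}[a^*[s]]$; conditioned on the $\qargmax$ call succeeding (an event whose probability is controlled by the choice of $f$ in the separate failure-probability analysis), $a^*[s]$ is a genuine maximizer, so $\tilde{v}_l[s]=\max_a\{q_{l-1,s}[a]\}=v(q_{l-1})[s]$. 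Under this identification the if-clause on Line 8 reads exactly as ``$v(q_{l-1})[s]\geq v_{l-1}[s]$'', matching the if-clause on Line 10 of \SolveMdpI.

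For monotonicity I would fix $s\in\Ss$ and split on the two branches. In the if-branch we are in the case $\tilde{v}_l[s]\geq v_{l-1}[s]$ and the update sets $v_l[s]=\tilde{v}_l[s]$, so $v_l[s]\geq v_{l-1}[s]$; in the else-branch the update sets $v_l[s]=v_{l-1}[s]$, so $v_l[s]\geq v_{l-1}[s]$ trivially. Hence $v_{l-1}\leq v_l$ element-wise. For the second claim I would use the same split: in the if-branch $v_l[s]=\tilde{v}_l[s]=v(q_{l-1})[s]$, so equality (hence $\geq$) holds; in the else-branch we are in the complementary case $\tilde{v}_l[s]<v_{l-1}[s]$, and the update sets $v_l[s]=v_{l-1}[s]>\tilde{v}_l[s]=v(q_{l-1})[s]$. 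Either way $v_l[s]\geq v(q_{l-1})[s]$, giving $v_l\geq v(q_{l-1})$.

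The only point needing care --- and the step I would flag as the real content --- is the identification $\tilde{v}_l=v(q_{l-1})$, which is where the approximate nature of $\qargmax$ enters; everything else is a deterministic consequence of the update rules. Since the global union bound over all $\qargmax$ (and $\qEst$) calls is handled in the correctness/complexity analysis of \SolveMdpII\ exactly as in the first paragraph of the proof of \propref{upper_bound}, I would simply invoke that accounting and carry out the case analysis above on the success event. This keeps the present lemma purely combinatorial, as the surrounding text anticipates when it calls the result ``clear''.
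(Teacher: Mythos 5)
Your proposal is correct and matches the paper's intended argument exactly: the paper states this lemma as ``clear'' from the if-then-else statement on Lines 8--9 of \SolveMdpII\ after noting the identification $\tilde{v}_l[s]=\max_a\{q_{l-1,s}[a]\}=v(q_{l-1})[s]$, with the success of the $\qargmax$ calls conditioned on in the correctness theorem's failure-probability accounting, just as you do. Your explicit two-branch case analysis is precisely the argument the paper leaves implicit, so there is nothing to add.
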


We now proceed to establish the correctness and complexity of \SolveMdpII. In the proof of correctness, we will reuse the proofs of \propref{upper_bound} and \propref{lower_bound} unchanged except that they now invoke \cref{lem:if_then_else_guarantees_2} instead of \cref{lem:if_then_else_guarantees}.
\begin{thm}[Correctness of \SolveMdpII.]\label{thm:correctness_2}
The outputs $\hat{v}$ and $\hat{\pi}$ of \SolveMdpII\ satisfy
\begin{equation}\label{eq:sandwich_v_2}
    v^*-\epsilon \leq \hat{v} \leq v^{\hat{\pi}} \leq v^*,
\end{equation}
with probability at least $1-\delta$. 
\end{thm}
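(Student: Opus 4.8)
The plan is to exploit the structural correspondence between \SolveMdpII\ and \SolveMdpI\ specialised to a single epoch ($k=1$), which the preceding discussion makes explicit, and to transport the two key propositions \propref{upper_bound} and \propref{lower_bound} almost verbatim. First I would dispose of the failure probability exactly as in the opening paragraph of the proof of \propref{upper_bound}: conditioned on every \qEstI\ step (Line 10) and every \qargmax\ step (Line 6) returning a correct answer, the output values become deterministic functions of the MDP, and a union bound over the $O(LSA)$ such events---each set to fail with probability at most $f \coloneqq \delta/(4c_{\max}LSA^{1.5}\log(1/\delta))$---keeps the total failure probability below $\delta$. The one point to verify here is that \qargmax, via \cref{thm:min_finding}, introduces \emph{no} value error when it succeeds: it returns the exact maximiser $a^*[s]$ over the finite action set, so $\tilde{v}_l[s] = q_{l-1,s}[a^*[s]] = \max_a q_{l-1}[s,a] = v(q_{l-1})[s]$ holds exactly. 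Conditioned on success, $\tilde{v}_l$ and the if-then-else update therefore coincide with the $v(q_{1,l-1})$ update of \SolveMdpI, which is precisely what makes \lemref{if_then_else_guarantees_2} the exact analogue of \lemref{if_then_else_guarantees}.

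For the upper inequalities $\hat{v} = v_L \leq v^{\hat{\pi}} \leq v^*$, I would replay the induction of \propref{upper_bound} unchanged. The only input that needs re-deriving is the one-sided bound $z_l \leq Pv_l$, which plays the role of the combined inequalities \cref{eq:base_monotone_anchor} and \cref{eq:base_monotone_difference}; it is immediate because the $(1-\gamma)\epsilon/4$ subtracted on Line 10 equals the guaranteed \qEstI\ error, so the shifted estimate never exceeds the true mean. With this in hand, the inductive step establishing $v_l \leq \T{\pi_l}(v_l)$ goes through identically: the two cases of the if-clause, the removal of the $\max$ using $\zero \leq v_{l-1}$, and the monotonicity $v_{l-1} \leq v_l$ are all supplied by \lemref{if_then_else_guarantees_2}, and applying the contraction $\T{\pi_l}$ repeatedly yields $v_l \leq v^{\pi_l} \leq v^*$.

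The lower inequality $v^* - \epsilon \leq v_L$ is where the argument actually \emph{simplifies} relative to \SolveMdpI: since \SolveMdpII\ carries no variance-reduction or total-variance machinery, the error vector is the constant $\xi \coloneqq (1-\gamma)\epsilon/2 \cdot \one$ (twice the per-step \qEstI\ shift $(1-\gamma)\epsilon/4$), with no $\sqrt{y_k+b}$ term, so \cref{thm:azar} is never invoked and no restriction on $\epsilon$ is needed. Mirroring \cref{eq:upper_bound_recursion}, the bound $z_l \geq Pv_l - (1-\gamma)\epsilon/2$ gives $q^* - q_l \leq \gamma P^{\pi^*}(q^* - q_{l-1}) + \xi$, and unrolling as in \cref{eq:overall_bound} produces $q^* - q_{L-1} \leq \gamma^{L-1}\Gamma + (I - \gamma P^{\pi^*})^{-1}\xi$. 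The first term is at most $\epsilon/4$ by the choice $L = \Gamma\lceil\log(4\Gamma/\epsilon)\rceil + 1$ exactly as in \cref{eq:gamma_h_bound}, and the second equals $\tfrac{(1-\gamma)\epsilon}{2}(I - \gamma P^{\pi^*})^{-1}\one = \tfrac{(1-\gamma)\epsilon}{2}(1-\gamma)^{-1}\one = \epsilon/2 \cdot \one$, so $q^* - q_{L-1} \leq \epsilon$. Feeding this into $v_L[s] \geq v(q_{L-1})[s] = \max_a q_{L-1}[s,a] \geq \max_a\{q^*[s,a] - \epsilon\} = v^*[s] - \epsilon$ gives the claim, and combining with the previous paragraph yields the full sandwich. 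I expect the only genuine subtlety---more a point to get right than a real obstacle---to be confirming that substituting \qargmax\ for the classical $\max$ and $\argmax$ preserves the exact value correspondence, so that every step of the two transported propositions remains valid with identical constants.
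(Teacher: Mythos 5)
Your treatment of the deterministic part of the argument (conditioned on all subroutines behaving correctly) matches the paper's proof essentially verbatim: the upper inequalities via the induction of \propref{upper_bound} using $z_l \leq P v_l$, and the lower inequality via the recursion of \propref{lower_bound} with the constant error vector $\xi = (1-\gamma)\epsilon/2 \cdot \one$, the bound $\gamma^{L-1}\Gamma \leq \epsilon/4$, and $(I-\gamma P^{\pi^*})^{-1}\xi \leq \epsilon/2$, with no invocation of \cref{thm:azar} and no restriction on $\epsilon$. That part is correct.

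However, there is a genuine gap in your handling of the failure probability, and it is precisely the point where the proof of \cref{thm:correctness_2} must differ from that of \propref{upper_bound}. In \SolveMdpI, each \qEst\ call is executed as a standalone procedure whose output is measured and stored as a classical number, so ``every estimate is correct'' is an event in a classical sample space and a union bound over the $O(KLSA)$ estimations applies. In \SolveMdpII\ this is not the case: on Lines 10--11 the \qEstI\ computations are never individually executed and measured; they are embedded coherently inside the unitaries $U_{z_{l,s}}$ and $U_{q_{l,s}}$, which $\qargmax$ then queries \emph{in superposition} over actions (this coherence is the entire source of the $\sqrt{A}$ speedup). Consequently there is no classical event ``the \qEstI\ step for $(l,s,a)$ returned a correct answer'' to condition on, and the guarantee of \cref{thm:min_finding} does not directly apply because it presumes a perfect oracle $U_u$ while $U_{q_{l-1,s}}$ is itself faulty. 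The paper instead runs a hybrid argument: $U_{z_{l,s}}$ is shown to be $2Af$-close in operator norm to an ideal oracle encoding a fixed vector of sufficiently accurate estimates; $U_{q_{l,s}}$, built from one call to $U_{z_{l,s}}$ and one to $U_{z_{l,s}}^{-1}$, is then $4Af$-close to its ideal version; and since $\qargmax$ makes at most $c_{\max}\sqrt{A}\log(1/\delta)$ queries, the operation it implements is $(c_{\max}\sqrt{A}\log(1/\delta)\cdot 4Af) = \delta/LS$-close to its ideal version---errors of unitaries adding linearly under composition (the ``quantum union bound''). Only then does a classical union bound over the $LS$ invocations of $\qargmax$ give total failure probability $\delta$. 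Note that this argument, not a count of $O(LSA)$ classical events, is what forces the factor $A^{1.5}\log(1/\delta)$ in the definition of $f$ on Line 3; your classical union bound would have needed only $LSA$ there, which is a sign that it is not the right mechanism. Your proposal as written asserts a conditioning that is not available, so this step needs to be replaced by the operator-norm/hybrid argument.
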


\begin{proof}
    We first consider the failure probability. The analysis is similar to that used to prove \propref{upper_bound} except that we now need to analyze quantum oracles that may fail. To do this, we appeal to basic facts about unitary matrices, in particular, a quantum version of the union bound stating that the failure probabilities of quantum operators, i.e., unitary matrices, add linearly. On Line 10, because $U_{z_{l,s}}$ is created using $\qEst$ with failure probability $f$, it is $2Af$-close to its ``ideal version''. More precisely, we mean that there exists a quantum oracle $U^{\text{ideal}}_{z_{l,s}}$ encoding $\widehat{(Pv_l)}[s,a]-(1-\gamma)\epsilon/4$, where $\widehat{(Pv_l)}[s,a]$ satisfies $|\widehat{(Pv_l)}[s,a]-(Pv_l)[s,a]|\leq (1-\gamma)\epsilon/4$, such that $\norm{U^{\text{ideal}}_{z_{l,s}}-U_{z_{l,s}}}_{\mathrm{op}}\leq 2Af$. Since $U_{q_{l,s}}$ can be created using one call to $U_{z_{l,s}}$ and one call to $U_{z_{l,s}}^{-1}$, it is $4Af$-close to its ideal version (defined similarly). Then, on Line 6, $\qargmax$ uses the oracle $U_{q_{l,s}}$ at most $c_{\max}\sqrt{A}\log(1/\delta)$ times. By the quantum union bound and substituting in the definition of $f$, this means the quantum operation implemented by $\qargmax$ is $(c_{\max}\sqrt{A}\log(1/\delta)\cdot 4Af = \delta/LS)$-close to its ideal version. This means that the output of $\qargmax$ is incorrect with probability at most $\delta/LS$. Since $\qargmax$ is invoked a total of $LS$ times, we see that the overall probability of failure is at most $\delta$ by the (usual) union bound. 

We henceforth assume the \qEst\ and $\qargmax$ steps are all correct and proceed to prove \cref{eq:sandwich_v_2}.

The last inequality, $v^{\hat{\pi}}\leq v^*$, is clear.

To prove the middle inequality, $\hat{v}\leq v^{\hat{\pi}}$, we can directly reuse the proof of \propref{upper_bound} provided we have $z_{l} \leq Pv_l.$ But this is clear because $x_{l}$ is equal to an estimate of $Pv_l$ with the estimation error subtracted off.

To prove the first inequality, $v^*-\epsilon \leq \hat{v}$, we can reuse the proof of \propref{lower_bound}, provided we have $z_l \geq Pv_l - (1-\gamma)\epsilon/2$, which is true. Defining $\xi \coloneqq (1-\gamma)\epsilon/2 \cdot\one\in \mathbb{R}^{SA}$, we see from the proof of \propref{lower_bound} that
\begin{equation}
    q^* - q_{L-1 }\leq \gamma^{L-1} \Gamma + (1-\gamma P^{\pi^*})^{-1}\xi \leq \epsilon,
\end{equation}
since $L \coloneqq \Gamma \ceil{\log(4\Gamma/\epsilon)}+1$. Therefore, for all $s\in \Ss$, we have
\begin{equation}
v_L[s] \geq v(q_{L-1})[s] = \max_{a}\{q_{L-1}[s,a] \geq \max_a\{q^*[s,a]-\epsilon\} = v^*[s] - \epsilon.
\end{equation}
\end{proof}

\begin{thm}[Complexity of \normalfont{\SolveMdpII}]
\label{thm:complexity_2}
The quantum query complexity of \SolveMdpII\ is
\begin{equation}
    \Order{S\sqrt{A} \, \Gamma^3\, \epsilon^{-1}  \, \log^2(\Gamma/\epsilon) \log(SA\Gamma/\delta)}.
\end{equation}
\end{thm}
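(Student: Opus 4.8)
The plan is to follow the template of the complexity proof for \SolveMdpI\ (\Cref{thm:complexity_1}): assume all subroutine calls succeed (justified by the failure-probability analysis already carried out in the proof of \Cref{thm:correctness_2}), express the total cost as a sum of the per-iteration, per-state costs, bound each such cost using \Cref{thm:montanaro_monte_carlo_mean} and \Cref{thm:min_finding}, and finally fold in the amplification factors. Concretely, the only query-incurring step is Line 6, where for each of the $L=O(\Gamma\log(\Gamma/\epsilon))$ iterations and each of the $S$ states we invoke $\qargmax_f$ over the $A$ actions, using the oracle $U_{q_{l-1,s}}$ built in the previous iteration. So I would write the total cost as $LS$ times (cost of one $\qargmax$ call), and unfold the latter into (number of oracle queries made by $\qargmax$) times (cost of one query to $U_{q_{l-1,s}}$).

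First I would bound the cost of a single query to $U_{q_{l-1,s}}$. By definition this oracle is assembled from $U_{z_{l-1,s}}$ together with $O(1)$ reversible arithmetic implementing $a\mapsto\max\{r[s,a]+\gamma z_{l-1,s}[a],0\}$, and each query to $U_{z_{l-1,s}}$ amounts to one run of $\AI$ estimating $(Pv_{l-1})[s,a]$ to additive error $(1-\gamma)\epsilon/4$. Since the correctness analysis (reusing \propref{upper_bound} via \lemref{if_then_else_guarantees_2}) guarantees $\zero\leq v_{l-1}\leq v^*\leq\Gamma$, the promise of part~1 of \Cref{thm:montanaro_monte_carlo_mean} holds with upper bound $u=\Gamma$ and target error $\epsilon'=(1-\gamma)\epsilon/4$, so one such run costs $O(u/\epsilon'+\sqrt{u/\epsilon'})=O(\Gamma^2/\epsilon)$ queries to $\Oracle$ at constant failure probability. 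Next, \Cref{thm:min_finding} bounds the number of oracle queries made by one $\qargmax$ call over $A$ elements by $O(\sqrt{A})$ at constant failure probability. Multiplying these and summing over all $L$ iterations and $S$ states gives a polynomial factor of $LS\sqrt{A}\cdot(\Gamma^2/\epsilon)=O(S\sqrt{A}\,\Gamma^3\epsilon^{-1}\log(\Gamma/\epsilon))$, which already matches the claimed dominant term up to logarithmic factors.

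The main obstacle is the bookkeeping of the failure probabilities and the associated logarithmic factors, which is genuinely more delicate here than in \Cref{thm:complexity_1} because $\qargmax$ is itself a quantum subroutine that queries an oracle, $U_{q_{l-1,s}}$, which is in turn built from inexact quantum mean-estimation subroutines. One therefore cannot simply take a classical union bound over independent estimations; instead I would reuse the quantum union bound argument from the proof of \Cref{thm:correctness_2}, where operator-norm errors add linearly: each $\AI$ is amplified to failure probability $f$ (costing a $\log(1/f)$ factor per query to $U_{z_{l-1,s}}$), so that $U_{q_{l-1,s}}$ is $4Af$-close to its ideal version, and $\qargmax$ is run at failure probability $\delta$ (costing a $\log(1/\delta)$ factor). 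The choice $f=\delta/(4c_{\max}LSA^{1.5}\log(1/\delta))$ is exactly what makes the accumulated operator-norm error at most $\delta$ over all $LS$ invocations. Combining the two amplification factors with the polynomial estimate above, and then bounding the nested logarithms as in \Cref{thm:complexity_1} — pushing the $\log\log(\Gamma/\epsilon)$ terms arising from $\log(1/f)$ into an additional factor of $\log(\Gamma/\epsilon)$ — collapses the polylogarithmic overhead to $\log^2(\Gamma/\epsilon)\log(SA\Gamma/\delta)$, yielding the stated bound. I expect this final combination of logarithmic factors, rather than any of the polynomial estimates, to be the only real subtlety.
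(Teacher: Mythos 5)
Your cost decomposition is exactly the paper's: $LS$ invocations of $\qargmax$, each making $O(\sqrt{A})$ queries to $U_{q_{l-1,s}}$, each such query costing $O(\Gamma^2/\epsilon)$ calls to $\Oracle$ via $\AI$ with upper bound $u=\Gamma$ and error $(1-\gamma)\epsilon/4$ (justified by $\zero\leq v_l\leq v^*\leq\Gamma$ from the correctness analysis). The resulting polynomial factor $O(S\sqrt{A}\,\Gamma^3\epsilon^{-1}\log(\Gamma/\epsilon))$, and its justification via \Cref{thm:montanaro_monte_carlo_mean}, \Cref{thm:min_finding}, and $L=O(\Gamma\log(\Gamma/\epsilon))$, coincide with the paper's proof.

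The gap is in the final log-factor bookkeeping, which you correctly flagged as the delicate point but then resolved in a way that does not yield the stated bound. You compound two amplification factors: $\log(1/\delta)$ for the $\qargmax$ repetitions and $\log(1/f)$ for the $\AI$ repetitions inside each oracle query. Since $\log(1/f)=\Theta(\log(LSA^{1.5}\log(1/\delta)/\delta))\geq\log(1/\delta)$, your total polylogarithmic overhead contains a term of order $\log(1/\delta)\log(SA\Gamma/\delta)$, and this is \emph{not} $O(\log^2(\Gamma/\epsilon)\log(SA\Gamma/\delta))$: fix $\Gamma/\epsilon$ (say $\epsilon=\Gamma/8$) and let $\delta\to 0$, and your overhead grows like $\log^2(1/\delta)$ while the theorem's grows only linearly in $\log(1/\delta)$. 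No rearrangement of nested logarithms repairs this; the compounded product genuinely exceeds the claimed bound in that regime. The paper's proof avoids compounding altogether: it defines $C$ as the complexity when \emph{every} subroutine --- the $\AI$ calls \emph{and} $\qargmax$ --- is run at constant failure probability, so $\qargmax$ contributes only $O(\sqrt{A})$ with no $\log(1/\delta)$ factor and $C=O(LS\sqrt{A}\,\Gamma^2\epsilon^{-1})=O(S\sqrt{A}\,\Gamma^3\epsilon^{-1}\log(\Gamma/\epsilon))$, and it then charges the move to the actual failure parameter $f$ as a \emph{single} global multiplicative factor $O(\log(LSA/\delta))=O(\log(\Gamma/\epsilon)\log(SA\Gamma/\delta))$, which gives exactly the stated bound. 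To prove the theorem as stated you must adopt that single-factor accounting rather than multiply one amplification factor per level of the nesting.
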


\begin{proof}
We can assume all results obtained during the correctness proof of \SolveMdpII.

We let $C$ be the complexity of \SolveMdpII\ as if all estimations and maximum finding were carried out with maximum failure probabilities set to constant. Then the actual complexity of our algorithm is
\begin{equation}\label{eq:amplification_factor_2}
    O(C\log(LSA/\delta)) = O(C\log(SA\Gamma\log(\Gamma/\eps)/\delta)),
\end{equation}
since the actual maximum failure probabilities are set to $f \coloneqq \delta/4c_{\max}LSA^{1.5}\log(1/\delta)$ and $L = O(\Gamma \log (\Gamma/\epsilon))$.

Now we bound $C$. Note that, for all $l\in [L]$, we have
\begin{equation}
    \zero\leq v_l \leq v^* \leq \Gamma.
\end{equation}
By using $\AI$ of \cref{thm:montanaro_monte_carlo_mean} to do the $\qEst$ on Line 10, the query complexity of $U_{z_{l,s}}$ is 
\begin{equation}
    \frac{\Gamma}{(1-\gamma)\epsilon/4} + \sqrt{\frac{\Gamma}{(1-\gamma)\epsilon/4}} = O(\Gamma^2/\epsilon),
\end{equation}
provided $\epsilon = O(\Gamma^2)$. But we have (trivially) assumed $\epsilon\leq \Gamma$ on the input $\epsilon$, so this holds.

As $U_{q_{l,s}}$ uses one call to $U_{z_{l,s}}$ and one call to its inverse $U_{z_{l,s}}^{-1}$, the query complexity of $U_{q_{l,s}}$ is twice that of $U_{z_{l,s}}$. 

By means of the quantum maximum finding algorithm (\cref{thm:min_finding}) we only incur a multiplicative factor of  $O(\sqrt{A})$ when we invoke $\qargmax$ over an action space of size $A$. That is, for each $l\in [L]$ and $s\in \Ss$,  $\qargmax$ makes $O(\sqrt{A})$ queries to $U_{q_{l,s}}$ to find $\argmax_{a}\{q_{l-1,s}[a]\}$. There are also $L$ iterations, so
\begin{equation}
    C = O(L \, S\sqrt{A}\, \Gamma^2 \eps^{-1}) = O(S\sqrt{A}\, \Gamma^3  \eps^{-1} \, \log(\Gamma/\epsilon)),
\end{equation}
because $L = O(\Gamma \log(\Gamma/\epsilon))$. Combining the above equation with \cref{eq:amplification_factor_2} shows that the overall quantum query complexity of \SolveMdpII\ is 
\begin{equation}
O(S\sqrt{A}\, \Gamma^3  \eps^{-1} \, \log(\Gamma/\epsilon) \log(SA\Gamma\log(\Gamma/\eps)/\delta)) = O(S\sqrt{A}\, \Gamma^3  \eps^{-1} \, \log^2(\Gamma/\epsilon) \log(SA\Gamma/\delta)),
\end{equation}
as desired.
\end{proof}

\section{Lower Bounds}\label{sec:lower_bounds}

We now state our lower bounds on the number of samples needed to compute $q^*$, $v^*$, and $\pi^*$. Since our proof technique is very modular, we can prove lower bounds for both classical and quantum algorithms with only minor changes. 

Our classical lower bounds match known results \cite{AzarMunosKappen_MdpGenerative_2012,Sidford_NearOptimal_2018} and use a similar hard MDP instance, but they are qualitatively stronger as explained in the Introduction (end of \cref{sec:technical_overview}).

These lower bounds are interesting when the parameters $S$, $A$, and $\Gamma$ are large since the algorithms scale polynomially in these parameters. To avoid edge cases that make the analysis tedious, we only prove the lower bound for $S,A\geq 2$, and $\Gamma \geq 10$ (equivalently $\gamma \in [0.9,1)$).

\begin{thm}[Classical and quantum lower bounds]\label{thm:lower_bound}
    Fix any integers $S,A \geq 2$ and $\gamma \in [0.9,1)$. Let $\Gamma \coloneqq (1-\gamma)^{-1}\geq 10$ and fix any $\epsilon \in (0,\Gamma/4)$. There exists an MDP with $S$ states, $A$ actions, and discount parameter $\gamma$ such that the following lower bounds hold:
    \begin{enumerate}[topsep=0pt,itemsep=0ex,partopsep=1ex,parsep=1ex]
        \item Given access to a classical generative oracle, any algorithm that computes an $\eps$-approximation to $q^*$, $v^*$, or $\pi^*$ must make $\Omega(SA\Gamma^3/\eps^2)$ queries.
        \item Given access to a quantum generative oracle, any algorithm that computes an $\eps$-approximation to $q^*$ must make $\Omega(SA\Gamma^{1.5}/\eps)$ queries and any algorithm that computes an $\eps$-approximation to $v^*$ or $\pi^*$ must make $\Omega(S\sqrt{A}\Gamma^{1.5}/\eps)$ queries.
    \end{enumerate}
\end{thm}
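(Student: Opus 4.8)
The plan is to prove all six bounds by \emph{reduction}, exactly in the modular spirit sketched at the end of \cref{sec:technical_overview}. I construct a family of hard MDPs whose quantum generative oracle $\Oracle$ is itself assembled from standard Boolean oracles, so that any algorithm solving the MDP problem yields an algorithm for a Boolean query problem whose complexity is already pinned down. Concretely, a biased coin of bias $p$ is realized by taking a Boolean oracle for a string $x\in\{0,1\}^n$ of Hamming weight $pn$ and returning the queried bit; placing the index register in uniform superposition turns one Boolean query into one application of the coin version of $\Oracle$. Thus each use of $\Oracle$ costs $O(1)$ Boolean queries, and known Boolean query lower bounds for approximate counting / bias estimation transfer directly to lower bounds on uses of $\Oracle$. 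Because the same instance is analyzed against both models, I obtain the classical and quantum bounds simultaneously, by invoking either the classical $\Omega(\sigma^2/\eta^2)$ or the quantum $\Omega(\sigma/\eta)$ lower bound for estimating a single coin's bias to additive error $\eta$, where $\sigma^2=p(1-p)$ is its variance.

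The hard instance is inspired by \cite{AzarMunosKappen_MdpGenerative_2012} but is engineered to have \emph{genuinely} $S$ states and $A$ actions. To each pair $(s,a)\in\Ss\times\As$ it attaches an independent Bernoulli transition with bias $p_{s,a}=1-\Theta(1/\Gamma)$, driving a rewarding self-loop. A short calculation gives $q^*[s,a]=\Theta\big(1/(1-\gamma p_{s,a})\big)=\Theta(\Gamma)$, whose sensitivity to $p_{s,a}$ is $\partial q^*/\partial p_{s,a}=\Theta(\Gamma^2)$, while the transition variance is $\sigma^2=p_{s,a}(1-p_{s,a})=\Theta(1/\Gamma)$. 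Hence resolving a single entry $q^*[s,a]$ to additive error $\eps$ forces estimating $p_{s,a}$ to additive error $\eta=\Theta(\eps/\Gamma^2)$. Plugging $\sigma=\Theta(1/\sqrt{\Gamma})$ and $\eta=\Theta(\eps/\Gamma^2)$ into the single-coin bounds yields the per-pair costs $\sigma/\eta=\Theta(\Gamma^{1.5}/\eps)$ quantumly and $\sigma^2/\eta^2=\Theta(\Gamma^{3}/\eps^2)$ classically.

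For $q^*$ I must recover \emph{all} $SA$ entries, so I reduce from the direct sum of $SA$ independent bias-estimation problems: a direct-sum/composition theorem for query complexity — elementary classically, and quantumly via the adversary method — multiplies the single-coin cost by $SA$, giving $\Omega(SA\,\Gamma^{1.5}/\eps)$ quantum and $\Omega(SA\,\Gamma^{3}/\eps^2)$ classical lower bounds. For $v^*$ and $\pi^*$ only the per-state maximum (or its argmax) is required, so I arrange each state $s$ so that its $A$ actions share a common baseline bias except for at most one ``special'' action whose bias is slightly larger; then computing $v^*[s]$ or $\pi^*[s]$ amounts to detecting and locating that action, i.e.\ solving an $A$-way search in which each element is a gapped bias test. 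Composing the search lower bound ($\sqrt{A}$ for $\Or_A$ quantumly, and the full $A$ classically) with the per-coin estimation hardness, and taking the direct sum over the $S$ states, produces $\Omega(S\sqrt{A}\,\Gamma^{1.5}/\eps)$ quantumly and $\Omega(SA\,\Gamma^{3}/\eps^2)$ classically.

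The main obstacle is the quantum composition step for $v^*$ and $\pi^*$: I need a clean reduction to $\Or_A$ composed with a gapped coin-bias subproblem, together with a composition theorem certifying that the composed quantum query complexity is the product $\sqrt{A}\cdot\Theta(\Gamma^{1.5}/\eps)$ rather than something smaller. Making the inner gadget balanced enough for the composition theorem to apply, while simultaneously guaranteeing that the global instance truly has $S$ states and $A$ actions with mutually independent, non-interfering biases — so that the $SA$ subproblems are genuinely in direct sum and the value functions decouple across pairs — is the delicate part; by comparison, the single-coin estimation bounds and the algebra of $1/(1-\gamma p)$ are routine.
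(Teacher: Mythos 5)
Your proposal follows the paper's proof essentially step for step: the same two\-/state self-loop gadget with biases $1-\Theta(1/\Gamma)$ and promise gap $\Theta(\eps/\Gamma^2)$, tensored over $S/2$ copies to reach $S$ states; the same realization of each coin by a Boolean string oracle so that one call to $\Oracle$ costs one Boolean query; single-coin hardness that is exactly the approximate-counting bounds the paper invokes (\lemref{approxcount}), merely rephrased as $\Omega(\sigma/\eta)$ and $\Omega(\sigma^2/\eta^2)$ bias estimation; the same $\Or_A$-composition for $v^*$ and $\pi^*$; and the same direct-sum steps (\lemref{structural}). For $v^*$ and $\pi^*$ your argument is sound and coincides with the paper's, granting the cited composition theorems (note that the classical $\Or$-composition is not ``elementary'' --- one needs \cite{Goos_GJPW_2017} --- but it is known, and Reichardt's adversary composition covers the quantum $\sqrt{A}$ factor even for the promise inner problem).

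The genuine gap is in your $q^*$ argument, exactly at the point you deferred as ``the delicate part,'' and it does not close. In an instance with genuinely $S$ states and $A$ actions, all actions at a state $s$ must share the self-loop state $s$, so for any action $a$ that is not the argmax, $q^*[s,a] = 1 + \gamma\, p_{s,a}\, v^*(s)$ where $v^*(s) = 1/(1-\gamma \max_{a'} p_{s,a'})$ does \emph{not} depend on $p_{s,a}$. Hence $\partial q^*[s,a]/\partial p_{s,a} = \gamma\, v^*(s) = \Theta(\Gamma)$: your claimed $\Theta(\Gamma^2)$ sensitivity, which is the derivative of the fixed point $1/(1-\gamma p)$, holds only for the single argmax action at each state, because only that action enjoys the feedback loop. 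Consequently an $\eps$-accurate $q^*$ forces only $S$ of the $SA$ coins to precision $\Theta(\eps/\Gamma^2)$; the remaining coins are forced only to precision $\Theta(\eps/\Gamma)$, i.e.\ a per-pair quantum cost of $\Theta(\sqrt{\Gamma}/\eps)$, so your direct sum yields only $\Omega(S\Gamma^{1.5}/\eps + SA\sqrt{\Gamma}/\eps)$ rather than $\Omega(SA\Gamma^{1.5}/\eps)$. This failure is not a bookkeeping artifact: when $A$ is polynomially larger than $\Gamma^3$, one can in fact compute $q^*$ of \emph{any} MDP (in particular this gadget) in $\widetilde{O}(S\sqrt{A}\,\Gamma^3/\eps + SA\Gamma/\eps) = o(SA\Gamma^{1.5}/\eps)$ quantum queries, by running \SolveMdpII\ to obtain $\hat{v}$ and then estimating each entry of $r+\gamma P\hat{v}$ with \qEstI, so the per-pair decoupling you want is impossible on shared states in that regime. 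For what it is worth, the paper's own proof is no more rigorous at this step --- it asserts that an $\eps$-estimate of $q^*$ ``encodes whether each $p_a$ is large or small,'' which fails for suboptimal actions by the same computation --- so your instinct that this decoupling is the crux rather than a routine step was correct; but neither your sketch nor a verbatim reading of the paper's argument closes it, and a complete proof of the $q^*$ claim needs either per-pair decoupled gadgets as in \cite{AzarMunosKappen_MdpGenerative_2012} (which inflates the state count) or a restriction on the parameter regime.
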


We first establish the lower bound for an MDP with $S=2$ and $A=1$. Note that when $A=1$, there is only one action per state, so it is trivial to compute the optimal policy. So we can only show hardness for computing $q^*$ or $v^*$, which will be the same because there is only one action.

\begin{figure}[h]
    \centering
    \includegraphics[scale=0.55]{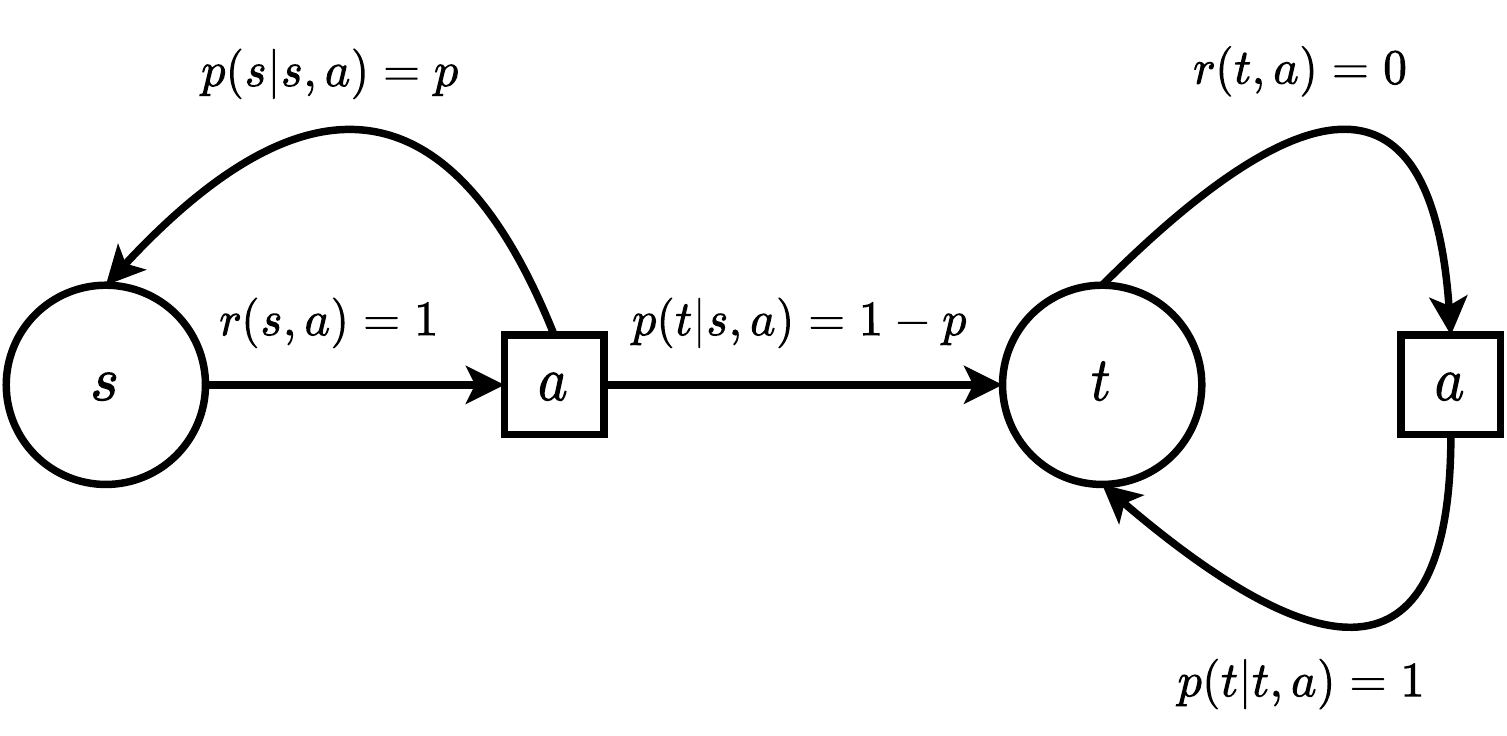}
    \caption{\label{fig:lowerbound}The MDP we use for the lower bound with $S=2$ and $A=1$. Distinguishing between $p\leq p_0$ and $p\geq p_0+\alpha$ is hard.}
\end{figure}

\begin{lem}\label{lem:two_state_lower_bound}
   Fix any $\gamma \in [0.9,1)$. Let $\Gamma \coloneqq (1-\gamma)^{-1} \geq 10$ and fix any $\epsilon \in (0,\Gamma/4)$. There exists an MDP shown in \Cref{fig:lowerbound} with 2 states and 1 action, for which computing $v^*$ (or equivalently, $q^*$) to error $\eps$ requires $\Omega(\Gamma^3/\eps^2)$ queries to a classical generative oracle or $\Omega(\Gamma^{1.5}/\eps)$ queries to a quantum generative oracle.
\end{lem}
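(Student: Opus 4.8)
The plan is to realize \Cref{fig:lowerbound} as a biased coin and reduce estimating $v^*$ to distinguishing two nearby biases, after which I invoke standard distinguishing lower bounds. Concretely, take state $s_1$ to have reward $1$ and, under its unique action, to self-loop with probability $p$ and jump to the absorbing zero-reward state $s_2$ with probability $1-p$. Solving the scalar Bellman equation $v^*[s_1] = 1 + \gamma p\, v^*[s_1]$ gives $v^*[s_1] = 1/(1-\gamma p)$, which is strictly increasing in $p$ and lies in $[0,\Gamma]$. I would choose the anchor bias $p_0$ so that $\beta_0 := 1-\gamma p_0 = \Theta(1/\Gamma)$, placing $v^*[s_1] = \Theta(\Gamma)$ in the regime where $v^*$ is most sensitive to $p$, and set the gap $\alpha = \Theta(\epsilon/\Gamma^2)$. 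Then $v^*[s_1]\big|_{p_0+\alpha} - v^*[s_1]\big|_{p_0} = \frac{\gamma\alpha}{\beta_0(\beta_0-\gamma\alpha)} \geq 2\epsilon$, so any $\epsilon$-accurate estimate of $v^*$ decides whether $p \leq p_0$ or $p \geq p_0+\alpha$. Since there is a single action, $q^* = v^*$ and the same reduction handles $q^*$.

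Next I would carry out the reduction to a Boolean oracle exactly as sketched in \cref{sec:technical_overview}: instantiate the quantum generative oracle $\Oracle$ of \Cref{def:generative_oracle} from a standard Boolean oracle $O_x$ for a string $x\in\{0,1\}^n$ with $|x| = pn$, by preparing $\frac{1}{\sqrt{n}}\sum_i \ket{i}\ket{x_i}$ so that the last bit plays the role of $s'$ (identifying bit value $1$ with $s_1$) and $\ket{i}$ plays the role of the garbage register $\ket{\psi_{s',s,a}}$. This makes one call to $\Oracle$ cost $O(1)$ queries to $O_x$, and deciding $p\leq p_0$ versus $p\geq p_0+\alpha$ becomes the promise problem of distinguishing Hamming weight $\leq p_0 n$ from $\geq (p_0+\alpha)n$. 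I would then invoke the known lower bounds for this Hamming-weight problem: classically $\Omega(p_0(1-p_0)/\alpha^2)$ samples are required, and quantumly $\Omega(\sqrt{p_0(1-p_0)}/\alpha)$ queries are required (by optimality of amplitude estimation, equivalently the Nayak--Wu quantum counting lower bound). Substituting $p_0(1-p_0) = \Theta(1/\Gamma)$ and $\alpha = \Theta(\epsilon/\Gamma^2)$ yields $\Omega(\Gamma^3/\epsilon^2)$ classically and $\Omega(\Gamma^{1.5}/\epsilon)$ quantumly.

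The routine but delicate part is the constant bookkeeping: I must check $p_0, p_0+\alpha \in [0,1]$ and $\gamma\alpha \leq \beta_0/2$ (which validates the value-gap bound above), and that these hold throughout the promised ranges $\gamma\in[0.9,1)$ and $\epsilon\in(0,\Gamma/4)$, tuning the hidden constant in $\beta_0 = \Theta(1/\Gamma)$ accordingly. The only genuine obstacle is ensuring the quantum reduction is faithful — that the arbitrary garbage register permitted in \Cref{def:generative_oracle} is harmless and that each $\Oracle$-call truly reduces to $O(1)$ Boolean queries — so that the quantum query lower bound for Hamming-weight distinguishing transfers to the generative-model setting without loss. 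The Bellman solve, the monotonicity of $v^*$ in $p$, and the final substitutions are all elementary.
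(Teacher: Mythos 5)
Your proposal matches the paper's proof essentially step for step: the same two-state MDP with $v^*[s]=1/(1-\gamma p)$, the same choice of anchor ($1-\gamma p_0=\Theta(1/\Gamma)$; the paper takes $p_0 = 1-1/\Gamma$) and gap ($\alpha=\Theta(\epsilon/\Gamma^2)$; the paper takes $3\epsilon/\Gamma^2$), the same reduction of the bias-distinguishing problem to Hamming-weight approximate counting by instantiating the generative oracle from a Boolean string oracle at one query per call, and the same invocation of the classical and Nayak--Wu quantum counting lower bounds. Your explicit preparation of $\tfrac{1}{\sqrt{n}}\sum_i \ket{i}\ket{x_i}$ with the index register as garbage is exactly the step the paper dismisses as ``easy to see,'' so the two arguments are the same.
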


\begin{proof}
    The MDP shown in \Cref{fig:lowerbound} has two states we call $s$ and $t$. State $t$ is a sink and the only transition from there is back to $t$ with no reward. Hence $v^*(t)=0$. State $s$ is a source, and on taking action $a$, there is a reward $r(s,a)=1$. The transition is probabilistic and controlled by an unknown probability $p\in(0,1)$. With probability $p$ we come back to $s$, and with probability $1-p$ we move to $t$. We can compute $v^*(s)$ using the equation $v^*(s) = 1 + \gamma (p v^*(s) + (1-p)v^*(t))$, which yields 
    \begin{equation}
        v^*(s)= \frac{1}{1-\gamma p}.
    \end{equation}
    Now further assume that we are promised that $p \leq p_0$ or $p \geq p_0 + \alpha$, where 
    \begin{equation}
        p_0 = 1 - \frac{1}{\Gamma} \quad \mathrm{and} \quad \alpha = \frac{3\eps}{\Gamma^2}.
    \end{equation}
    Note that $p_0+\alpha <1$ because of the way we have chosen the range of $\eps$.
    
    We claim that computing $v^*(s)$ to additive error $\eps$ will allow us to distinguish these two cases. To see this, note that the difference between the two values of $v^*(s)$ is at least
    \begin{equation}
    \begin{aligned}
        & \frac{1}{1-\gamma (p_0+\alpha)} - \frac{1}{1-\gamma p_0}\\
        = & \frac{\gamma \alpha}{(1-\gamma(p_0+\alpha))(1-\gamma p_0)}\\
        > & \frac{\gamma \alpha}{(1-\gamma p_0)^2} 
        \geq \frac{0.9 \alpha}{(1.1/\Gamma-1/10\Gamma^2)^2}\\
        \geq & {0.9 \alpha \Gamma^2/1.21}
        \geq \alpha \Gamma^2/1.35 \geq 2\eps.
    \end{aligned}
    \end{equation}
    Thus computing $v^*$ to additive error $\eps$ will allow us to distinguish these two possibilities.
    
    Now we just have to show that distinguishing a coin with probability of heads at most $p_0$ or at least $p_0 + \alpha$ given samples from this coin is as hard as claimed in the lower bound. We prove this via query complexity.
    
    Suppose that instead of having sample access to a coin, we have query access to an $n$-bit string $x$ with the promise that either at most $p_0$ fraction of its bits is equal to $1$ or at least $p_0 + \alpha$ fraction of its bits is equal to 1. Both quantumly and classically, we can query any bit $x_i$ of $x$ using $1$ query. It is easy to see that we can generate a sample from our coin with probability of heads equal to $|x|/n$ (the fraction of $1$s in $x$) with only $1$ query to $x$. This works both classically and quantumly.
    
    So we have shown a reduction from the problem of computing $v^*$ to error $\eps$ to the problem of deciding whether $|x|/n \leq p_0$  or $|x|/n \geq p_0 + \alpha$  given query access to an $n$-bit string $x$. This is the approximate counting problem. If we count the number of $0$s, we want to distinguish $1/\Gamma$ $0$s from $(1/\Gamma - 3\eps/\Gamma^2)$ $0$s. We need to approximate the count to multiplicative precision $O(\eps/\Gamma)$. Finally, we can invoke the known lower bounds for approximate counting summarized in \lemref{approxcount}. These give a classical lower bound of $\Omega(\Gamma^3/\eps^2)$ and a quantum lower bound of $\Omega(\Gamma^{1.5}/\eps)$ as claimed.    
\end{proof}

We formally state the approximate counting lemma used in the previous proof. The quantum bounds are due to \cite{NayakWu_Counting_1999} and \cite{Brassard_AmplitudeEstimation_2000}.

\begin{lem}[Approximate counting]\label{lem:approxcount}
    Let $x\in\{0,1\}^n$ be a string to which we have standard classical or quantum query access (i.e., we can query the $i$th bit and receive $x_i$). Then deciding whether $|x|\leq k$ or $|x|\geq k(1+\eps)$ for $1\leq k<n/2$ and $\epsilon \geq 1/k$, requires $\Theta(\min\{\frac{n}{\eps^2k},n\})$ classical queries or $\Theta(\frac{1}{\eps}\sqrt{\frac{n}{k}})$ quantum queries.
\end{lem}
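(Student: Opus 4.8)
The plan is to establish the four bounds separately---classical upper, classical lower, quantum upper, and quantum lower---importing the two quantum bounds from known results and proving the classical ones from elementary concentration and information-theoretic arguments.

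For the classical upper bound, I would query $m$ uniformly random indices (with replacement) and threshold on the observed fraction of ones. Writing $p \coloneqq |x|/n$, the two cases correspond to $p \leq k/n$ versus $p \geq k(1+\eps)/n$, a multiplicative gap of $\eps$ around a value of order $k/n$. A multiplicative Chernoff bound shows that $m = O(n/(\eps^2 k))$ samples suffice to estimate $p$ finely enough to separate the cases with constant success probability; since reading the entire string costs $n$ queries and yields the exact count, the overall cost is $O(\min\{n/(\eps^2 k),\, n\})$.

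For the classical lower bound, I would invoke Yao's minimax principle against a pair of planted input distributions: $D_0$ places exactly $k$ ones in uniformly random positions and $D_1$ places $\lceil k(1+\eps)\rceil$ ones. It then suffices to show that the transcript of any deterministic algorithm making $q$ queries has total-variation distance $o(1)$ between $D_0$ and $D_1$ whenever $q = o(\min\{n/(\eps^2 k),\, n\})$. The per-query information is of order $\mathrm{KL}(\mathrm{Ber}(k/n)\,\|\,\mathrm{Ber}(k(1+\eps)/n)) = O(\eps^2 k/n)$, so by the chain rule for $\mathrm{KL}$ together with Pinsker's inequality one needs $q\cdot \eps^2 k/n = \Omega(1)$, i.e. $q = \Omega(n/(\eps^2 k))$. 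The step I expect to be the real obstacle is the without-replacement regime responsible for the $n$ inside the minimum: when $\eps^2 k < 1$ the naive i.i.d.\ $\mathrm{KL}$ bound formally exceeds $n$, and because queried bits are learned exactly rather than resampled, the argument there must instead be combinatorial---for instance, conditioning on the revealed bits and showing that with fewer than a constant fraction of $n$ queries the number of hidden ones still straddles the threshold with constant probability.

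For the quantum bounds I would cite the standard results rather than reprove them. The upper bound follows from amplitude estimation~\cite{Brassard_AmplitudeEstimation_2000}: marking the indices $i$ with $x_i = 1$ inside a uniform superposition gives a good-subspace amplitude $\sqrt{|x|/n}$, and amplitude estimation returns a relative-$\eps$ estimate of $|x|/n$ using $O(\tfrac{1}{\eps}\sqrt{n/k})$ queries, which suffices to separate the two cases (and, since $\eps\geq 1/k$ and $k<n/2$ force $\tfrac{1}{\eps}\sqrt{n/k} \leq \sqrt{nk} < n$, no capping at $n$ is needed). The matching quantum lower bound is precisely the counting lower bound of Nayak and Wu~\cite{NayakWu_Counting_1999}, obtained via the polynomial method; here I would only verify that their parameter regime agrees with ours ($1\leq k<n/2$, $\eps \geq 1/k$) and quote the resulting $\Omega(\tfrac{1}{\eps}\sqrt{n/k})$.
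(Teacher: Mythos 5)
Your proposal is correct, but it does considerably more than the paper, which offers no proof of this lemma at all: the paper states it as a known result, attributing the quantum upper and lower bounds to \cite{Brassard_AmplitudeEstimation_2000} and \cite{NayakWu_Counting_1999} respectively, and treating the classical $\Theta(\min\{n/(\epsilon^2 k),\,n\})$ bound as standard. Your quantum half therefore coincides with the paper's route (cite amplitude estimation for the upper bound, Nayak--Wu's polynomial-method bound for the lower bound, and check the parameter regime $1\leq k<n/2$, $\epsilon\geq 1/k$). Your classical half is the genuinely new content: the sampling-plus-multiplicative-Chernoff upper bound capped at $n$ is correct, and the Yao/KL lower bound is the standard argument, with per-query divergence $\mathrm{KL}(\mathrm{Ber}(k/n)\,\|\,\mathrm{Ber}(k(1+\epsilon)/n)) = O(\epsilon^2 k/n)$ giving $\Omega(n/(\epsilon^2 k))$. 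You correctly identify the one place needing real care: since $\epsilon\geq 1/k$ only forces $\epsilon^2 k\geq 1/k$, the quantity $n/(\epsilon^2 k)$ can exceed $n$, and in that regime the i.i.d.\ KL accounting is vacuous; one must argue combinatorially (e.g., planted inputs with $k$ versus $\lceil k(1+\epsilon)\rceil$ ones, where after $o(n)$ queries the conditional distributions of the hidden bits still overlap with constant probability) to get the $\Omega(n)$ floor, and one must also note that the chain-rule/Pinsker step survives adaptivity and without-replacement queries because the conditional law of each newly queried bit stays within a constant factor of $\mathrm{Ber}(k/n)$ while fewer than $n/2$ positions are exposed. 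What each approach buys: the paper's citation-only treatment is economical and appropriate for a result that is folklore in query complexity, while your version makes the classical bounds self-contained at the cost of these technical details, which you have flagged but not fully discharged.
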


We can now extend the lower bound to larger $S$ and $A$. Before doing so, we will need some structural theorems about quantum query complexity and randomized query complexity. For a function $f$, let $R(f)$ and $Q(f)$ denote their randomized and quantum query complexities.
The first result shows that computing the logical \Or\ of $k$ copies of a problem scales with $k$. The classical result is due to \cite{Goos_GJPW_2017} and the quantum result follows from a general composition theorem for quantum query complexity in \cite{Reichardt_Composition_2011}. The second result, known as a direct sum result, can also be found in \cite{Reichardt_Composition_2011}.

\begin{lem}\label{lem:structural}
    Let $\Or_k$ be the logical $\Or$ function on $k$ bits and $f$ be an arbitrary Boolean function. Then the complexity of the composed function $\Or_k \circ f$, which is defined as the logical \Or\ of the $k$ outputs of $k$ independent instances of $f$ is related to the complexity of $f$ as follows: $Q(\Or_k \circ f) = \Omega(\sqrt{k} \, Q(f))$ and $R(\Or_k \circ f) = \Omega(k R(f))$.
    In addition, computing all $k$ outputs of $k$ independent instances of $f$ requires $\Omega(k\, R(f))$ queries classically and $\Omega(k\, Q(f))$ queries quantumly.
\end{lem}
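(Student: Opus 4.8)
The plan is to assemble the four stated lower bounds from known composition and direct-sum theorems in query complexity, viewing both $\Or_k$ and the ``output all $k$ answers'' map as outer functions composed with $f$. First I would record the complexities of these outer functions: for the logical OR on $k$ bits the bounded-error complexities are $R(\Or_k) = \Theta(k)$ (randomized unstructured search) and $Q(\Or_k) = \Theta(\sqrt{k})$ (Grover's algorithm, optimal by the BBBV/adversary lower bound), while the map that simply reproduces its $k$ input bits has both randomized and quantum complexity $\Theta(k)$, since every bit must be read to be output.

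For the quantum OR bound I would invoke the negative-weight adversary quantity $\mathrm{ADV}^\pm$. The three facts I need are that $\mathrm{ADV}^\pm$ lower-bounds $Q$, that it composes perfectly, $\mathrm{ADV}^\pm(g\circ f)=\mathrm{ADV}^\pm(g)\cdot\mathrm{ADV}^\pm(f)$, and that it is tight, $\mathrm{ADV}^\pm(f)=\Theta(Q(f))$, by Reichardt's characterization. Chaining these gives
\[
Q(\Or_k \circ f)\ \geq\ \mathrm{ADV}^\pm(\Or_k\circ f)\ =\ \mathrm{ADV}^\pm(\Or_k)\cdot\mathrm{ADV}^\pm(f)\ =\ \Theta(\sqrt{k})\cdot\Theta(Q(f)),
\]
which is $\Omega(\sqrt{k}\,Q(f))$. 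The quantum direct-sum bound follows by the identical argument with $\Or_k$ replaced by the all-outputs map, whose adversary value is $\Theta(k)$, yielding $\Omega(k\,Q(f))$; both are the composition and direct-sum results of \cite{Reichardt_Composition_2011}.

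For the classical statements I would cite the randomized composition theorem for $\Or$ of \cite{Goos_GJPW_2017}, giving $R(\Or_k\circ f)=\Omega(R(\Or_k)\,R(f))=\Omega(k\,R(f))$, and a standard randomized direct sum for independent instances for the remaining bound: taking a hard distribution for $f$ and replicating it independently across the $k$ blocks, any $T$-query algorithm correct on the product distribution must spend $\Omega(R(f))$ queries per block on average, so $T=\Omega(k\,R(f))$ to compute all outputs.

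The main obstacle is not computation but correctly matching hypotheses. The quantum arguments rely on the deep fact that $\mathrm{ADV}^\pm$ is simultaneously a lower bound, perfectly composable, and tight for the inner $f$ (so that the replacement $\mathrm{ADV}^\pm(f)=\Theta(Q(f))$ is licensed), and the classical $\Or$-composition bound is special to the outer function being $\Or$, since randomized query complexity is not known to compose for general outer functions. I would therefore take care to state the composition bound only for $\Or_k$ and only in the lower-bound ($\Omega$) direction, which is all that the downstream lower-bound proof requires.
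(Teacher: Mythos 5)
Your proposal is correct and follows essentially the same route as the paper, which proves this lemma purely by citation: the randomized \Or-composition bound from \cite{Goos_GJPW_2017}, and the quantum composition and direct-sum bounds from the adversary-based composition theorem of \cite{Reichardt_Composition_2011}, with the direct sum likewise viewed as composition with the identity map $\Id_k$. Your expansion of the quantum case via the negative-weight adversary (lower bound, perfect composition, and tightness) is precisely the machinery underlying the cited theorem, so there is no substantive difference in approach.
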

Note that the ``in addition'' result can be viewed as a result about the query complexity of $f$ composed with the function $\Id_k:\{0,1\}^k\rightarrow\{0,1\}^k; \, x\mapsto x$.

We are now ready to prove the main lower bound theorem.

\begin{proof}[Proof of \cref{thm:lower_bound}]
We start by keeping $S=2$ and allowing arbitrarily large $A\geq 2$. For notational convenience, we identify $\As$ with $\{1,\dots,A\}$.

We will use essentially the same instance as in \cref{fig:lowerbound} but now with $A$ outgoing actions from state $s$, each with transition probability $p_a$ for $a\in \As$. The modified instance is illustrated in \cref{fig:lowerbound_multi}. We again consider the case where all the $p_a$ satisfy the promise that they are either small ($\leq p_0$) or large ($\geq p_0 + \alpha$). As argued in the previous proof, deciding if a given $p_a$ is small or large has a classical lower bound of $\Omega(\Gamma^3/\eps^2)$ and a quantum lower bound of $\Omega(\Gamma^{1.5}/\eps)$.

\begin{figure}[htb]
    \centering
    \includegraphics[scale=0.55]{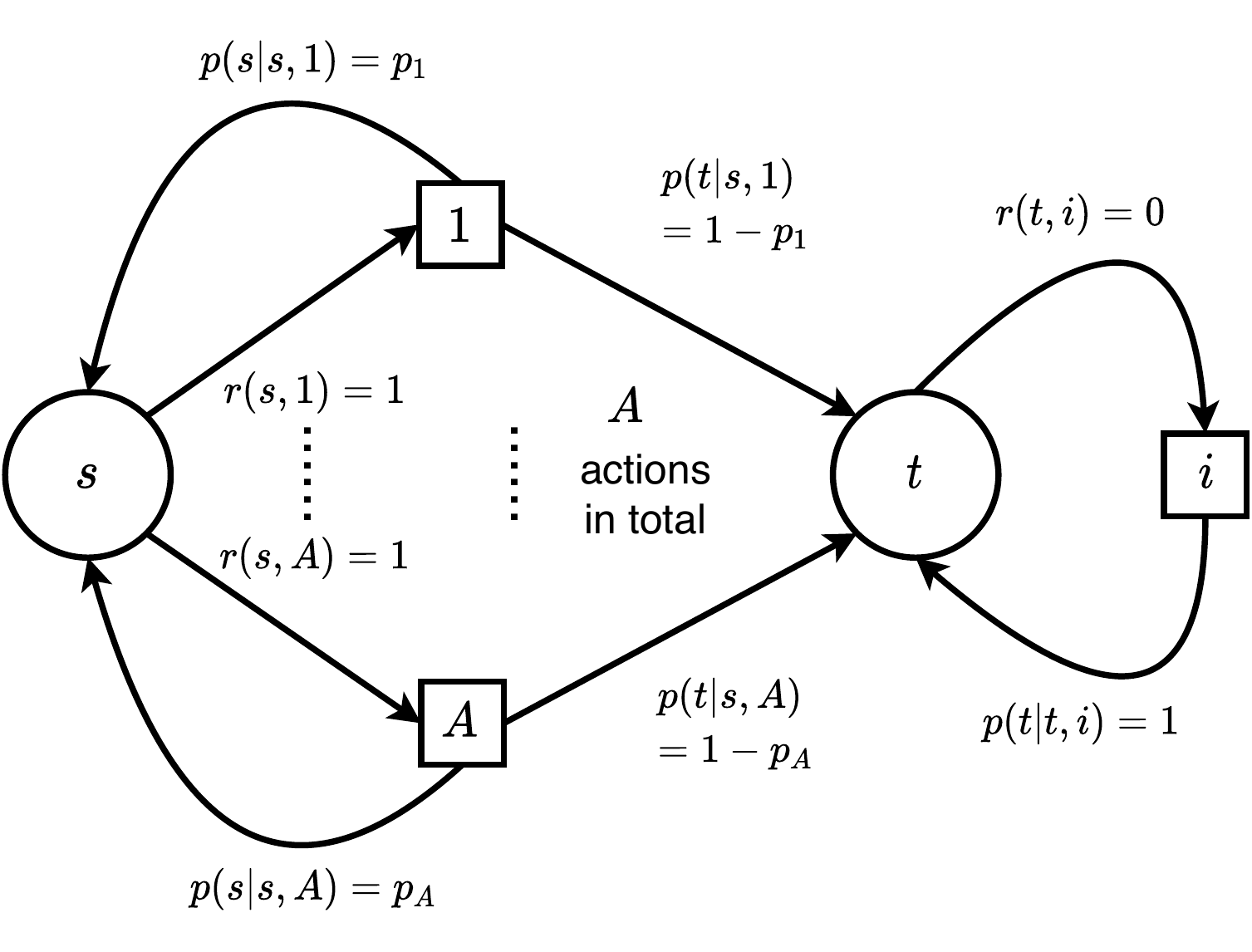}
    \caption{\label{fig:lowerbound_multi}The MDP we use for the lower bound with $S=2$ and arbitrary $A$. For each $i$, $p_i$ is promised to be either $\leq p_0$ or $\geq p_0+\alpha$. Any action $i\in \As$ taken from state $t$ always returns to $t$ with zero reward.}
\end{figure}

Now consider the problem of deciding whether any of the $p_a$ is small or large. This is the logical \Or\ of $A$ independent problems, each of which we have already shown a lower bound for. If we could compute $v^*$ to error $\epsilon$, then we would be able to solve this problem. Hence using \lemref{structural}, we get a classical lower bound of $\Omega(A\Gamma^3/\eps^2)$ and a quantum lower bound of $\Omega(\sqrt{A}\Gamma^{1.5}/\eps)$ for the problem of computing $v^*$. 

Similarly, consider the problem of deciding which of the $p_a$ is large, promised that exactly one of them is large and the rest are small. This is similar to logical \Or, except the goal is to identify the location of a $1$ promised that it exists. This problem is as hard as logical \Or, and we get the same lower bounds. For such an instance, computing $\pi^*$ to error $\eps$ will allow us to distinguish the two cases, since $\pi^*(s)$ should equal the unique action for which $p_a$ is large. This gives us the claimed lower bounds for $\pi^*$.

Similarly, consider the problem of learning which $p_a$s are large and which are small for all $a$ (without any promise on the number of each type). This is the problem of solving $A$ independent instances of a problem for which we have already proved a lower bound. For quantum and classical algorithms, this increases the complexity by a factor of $A$ as stated in the second part of \lemref{structural}. Thus we get a classical lower bound of $\Omega(A\Gamma^3/\eps^2)$ and a quantum lower bound of $\Omega(A\Gamma^{1.5}/\eps)$ for this problem. But if we could compute $q^*$ to error $\epsilon$, then we would be able to solve this problem since such an estimate encodes whether each $p_a$ is large or small. This gives us the claimed lower bounds for $q^*$.

Thus we have established all the lower bounds for $S=2$ and arbitrary $A$. Finally, to extend the lower bounds to arbitrarily large $S$, we can just use $S/2$ copies of the MDP in \cref{fig:lowerbound_multi}. Computing any one of the quantities $q^*$, $v^*$, or $\pi^*$ on this MDP instance means solving $S/2$ independent copies of the problems discussed above. As stated in the second part of \lemref{structural}, for both classical and quantum algorithms, this increases the complexity by a factor of $\Omega(S)$. This yields the claimed lower bounds for general $S$ and $A$.
\end{proof}

\section{Conclusion}

To the best of our knowledge, ours is the first work to rigorously study quantum algorithms for solving MDPs. We show that quantum computers can offer quadratic speedups in terms of $\Gamma$, $\epsilon$, and $A$ in calculating $q^*$, $v^*$, and $\pi^*$. We show our algorithms are either optimal, or optimal assuming $\Gamma$ or $A$ is constant, for certain ranges of $\epsilon$. We discuss some open problems left from our work:

\begin{enumerate}[topsep=0pt,itemsep=0ex,partopsep=1ex,parsep=1ex, leftmargin=*]
\item Can we give optimal algorithms in all parameters ($S$,\, $A$,\, $\Gamma$,\, $\epsilon$) for an unrestricted range of $\epsilon$? A first step towards answering this question may be to try to interpolate between \SolveMdpI\ and \SolveMdpII\ by adjusting the number of epochs and the length of each epoch. This question partly reduces to the purely classical question of finding a sample-optimal algorithm for $v^*$ and $\pi^*$ that has \textit{space} complexity $\Theta(S)$ instead of $\Theta(SA)$.

\item Can we circumvent our quantum lower bounds? In our work, we made few assumptions on the MDP. For special classes of MDPs, there may be greater quantum speedups that break our current quantum lower bounds. Such speedups may also be available in the function approximation setting or if we only ask for a few entries of the vectors $q^*$, $v^*$, and $\pi^*$. For example, see \cite{Ambainis_Dp_2019}.

\item Can we quantize model-based classical algorithms?  Our quantum algorithms are all model-free. But classically, the current best MDP solver is model-based~\cite{Li_TightUpper_2020}. Therefore it is natural to try to construct a quantum model-based algorithm. 
\end{enumerate}

%=============================================================================
\section*{Acknowledgements}
We especially thank Wen Sun for suggesting the tabular MDP setting as the first place to search for quantum speedups and for referring us to \cite{AgarwalJiangKakadeSun_Reinforcement_2021}. We also thank Aaron Sidford, Mengdi Wang, and Xian Wu for helpful discussions on \cite{Sidford_SWWY_2021}. DW acknowledges funding by the Army Research Office (grant W911NF-20-1-0015) and NSF award DMR-1747426. Part of this work was performed while DW was an intern at Microsoft.

%=============================================================================

\appendix

\section{Construction of the Quantum Generative Model}\label{sec:app_oracle_construction}
In this appendix, we describe how to systematically and efficiently construct the quantum generative model (\cref{def:generative_oracle}) from a circuit $\C$ for a classical generative model.

Recall the definition of a classical generative model: for a given state-action pair $(s,a)\in (\Ss,\As)$, $\C$ generates $s'$ with probability $p(s'|s,a)$. Since $\C$ is the circuit of a randomized algorithm, it can be represented as a \emph{deterministic} circuit that takes in two inputs $(s,a)\in (\Ss,\As)$ and $x\in \{0,1\}^m$, and outputs $s'\in \Ss$ with
\begin{equation}
    \pr_{x \sim_U\{0,1\}^m}(\C(s,a,x)=s') = p(s'|s,a),
\end{equation}
where $x \sim_U\{0,1\}^m$ means $x$ is uniformly selected from $\{0,1\}^m$. That is
\begin{equation}\label{eq:count_x}
    | \{x\in \{0,1\}^m \mid  \C(s,a,x)= s' \} | = 2^m\cdot p(s'|s,a).
\end{equation}
Pictorially, $\C$ is of the form:
\begin{equation}
\begin{aligned}
\Qcircuit @C=1em @R=.7em {\lstick{\Ss\times \As \ni  (s,a)} & \qw & \multigate{1}{\quad \C \quad} & \qw & \hspace{10mm}\C(s,a,x) \\
\lstick{\{0,1\}^m \ni x} & \qw & \ghost{\quad \C \quad} &  & \\
}
\end{aligned}
\end{equation}

Now, as $\C$ is a deterministic circuit, we can systematically make it a reversible classical circuit by \cite{Ben73} (see \cite[Sec.~1.4.1]{NielsenChuang_QuantumComputation_2000} for a textbook exposition). This gives another circuit, $\C'$, consisting of $O(\text{size}(\C))$ Toffoli and $\notgate$ gates that uses an additional $O(\text{size}(\C))$ ancillary bits\footnote{We can be more precise if $\C$ consists entirely of $\nandgate$ gates ($\nandgate$ is universal for classical computation). In this case, $\C'$ would, at most, consist of $2\times \text{size}(\C)$ Toffoli gates and $2\times \text{size}(\C)$ $\notgate$ gates, and use an additional $\text{size}(\C)$ ancillary bits.}, of the form:

\begin{equation}
\begin{aligned}
\Qcircuit @C=1em @R=.7em {\lstick{\Ss\times \As \ni  (s,a)} & \qw & \multigate{3}{\quad \C' \quad} & \qw & \hspace{10mm} (s,a) \\
\lstick{\{0,1\}^m \ni x} & \qw & \ghost{\quad \C' \quad} & \qw & \hspace{10mm} x \\
\lstick{\Ss \ni 0_{\Ss}} & \qw & \ghost{\quad \C' \quad} & \qw & \hspace{10mm} \C(s,a,x) \\
\lstick{0^n} & \qw & \ghost{\quad \C' \quad} & \qw & \hspace{10mm} 0^{n}
}
\end{aligned}
\end{equation}
where the $0_{\Ss}$ input of the third wire represents some fixed state in $\Ss$, the wires at the same height (to the left and right of $\C'$) use registers of the same size, and $0^n$ is an ancillary bitstring with $n = O(\text{size}(\C))$.

Now, we can change all the classical Toffoli and $\notgate$ gates in $\C'$ into quantum Toffoli and Pauli-$X$ gates (note that this changes the physical implementation of $\C'$) to produce a quantum circuit $\Q'$. $\Q'$ behaves the same as $\C'$ on classical inputs but is now also able to accept quantum superpositions of these classical inputs. Pictorially, $\Q'$ is of the form:
\begin{equation}\label{eq:qprime_circuit}
\begin{aligned}
\Qcircuit @C=1em @R=.7em {\lstick{\mathbb{C}^{S\times A}\ni\ket{s,a}} & \qw & \multigate{3}{\quad \Q' \quad} & \qw & \hspace{10mm} \ket{s,a} \\
\lstick{(\mathbb{C}^{2})^{\otimes m} \ni\ket{x}} & \qw & \ghost{\quad \Q' \quad} & \qw & \hspace{10mm} \ket{x} \\
\lstick{\mathbb{C}^S \ni \ket{0_{\mathcal{S}}}} & \qw & \ghost{\quad \Q' \quad} & \qw & \hspace{10mm} \ket{\C(s,a,x)} \\
\lstick{\ket{0}^{\otimes n}} & \qw & \ghost{\quad \Q' \quad} & \qw & \hspace{10mm} \ket{0}^{\otimes n}
}
\end{aligned}
\end{equation}

Now, we append a Hadamard gate to each of the $m$ qubits of the second register of $\Q'$ at the start of the computation to give $\Q$. Pictorially, $\Q$ is of the form:
\begin{equation}
\begin{aligned}
\Qcircuit @C=1em @R=.7em 
{
&  & \Q & & &  
\\
&  &  & & &  
\\
\lstick{\mathbb{C}^{S\times A}\ni\ket{s,a}} & \qw & \qw & \multigate{3}{\quad \Q' \quad} & \qw & \hspace{10mm}
\\
\lstick{(\mathbb{C}^{2})^{\otimes m} \ni\ket{0}^{\otimes m}} &\qw & \gate{H^{\otimes m}} & \ghost{\quad \Q' \quad} & \qw & \hspace{10mm}
\\
\lstick{\mathbb{C}^S \ni \ket{0_{\Ss}}} & \qw & \qw & \ghost{\quad \Q' \quad} & \qw &
\\
\lstick{\ket{0}^{\otimes n}} & \qw & \qw & \ghost{\quad \Q' \quad} & \qw & \hspace{10mm} \gategroup{3}{2}{6}{4}{.7em}{--}
}
\end{aligned}
\end{equation}
We can compute the output of $\Q$ on the input $\ket{s,a}\ket{0}^{\otimes m}\ket{0_{\Ss}}\ket{0}^{\otimes n} \equiv \ket{s,a,0^m,0_{\Ss},0^n}$ as follows.

\begin{align*}
    \Q \ket{s,a,0^m,0_{\Ss},0^n} &= \Q' \frac{1}{\sqrt{2^m}}\sum_{x\in \{0,1\}^m} \ket{s,a,x,0_{\Ss},0^n} &&\text{(apply $H^{\otimes m}$)}
    \\
    &= \frac{1}{\sqrt{2^m}}\sum_{x\in \{0,1\}^m} \ket{s,a,x,\C(s,a,x),0^n} &&\text{(\cref{eq:qprime_circuit})}
    \\
    &= \ket{s,a}\sum_{s' \in \Ss} \Bigl( \frac{1}{\sqrt{2^m}}\sum_{x\in \{0,1\}^m \, \mid \, \C(s,a,x)=s'}\ket{x} \Bigr) \ket{s'}\ket{0^n} &&\text{(rearrange sum).}
\end{align*}

Now, the $m$-qubit state in the brackets has norm $|\{x\in \{0,1\}^m \, \mid \, \C(s,a,x)=s'\}|/2^m = p(s'|s,a)$ due to \cref{eq:count_x}. Therefore, it can be rewritten as $\sqrt{p(s'|s,a)}\ket{\psi_{s,a,s'}}$ for some normalized state $\ket{\psi_{s,a,s'}}$. Therefore
\begin{equation}
\Q \ket{s,a,0^m,0_{\Ss},0^n} = \ket{s,a}\sum_{s' \in \Ss}  \sqrt{p(s'|s,a)}\ket{\psi_{s,a,s'}}\ket{s'} \ket{0^n}.
\end{equation}

Finally, swapping the $\ket{\psi_{s,a,s'}}$ and $\ket{s'}$ registers,  dropping the $\ket{0^n}$ register for convenience, and renaming $0_{\Ss}$ to just $0$, we see that $\Q$ is precisely of the form in \cref{def:generative_oracle}.

\bibliographystyle{alphaurl}
\bibliography{references}
\end{document}